\UseRawInputEncoding
\documentclass[11pt,reqno]{amsart}
\textheight 23truecm \textwidth 17truecm \setlength{\topmargin}{-1.0
	cm} \setlength{\oddsidemargin}{-0.5 cm}
\setlength{\evensidemargin}{-0.5cm} \pagestyle{plain}
\allowdisplaybreaks[4]
\usepackage{graphicx}
\usepackage{subfigure}
\usepackage{epsfig}
\usepackage{amssymb}
\usepackage{mathtools}
\usepackage{amsmath,xypic}
\usepackage{cite,color}
\usepackage{cite}
\usepackage{amsmath}
\usepackage{tikz}
\usepackage{epic,eepic,amsmath,amsthm,amssymb}
\usepackage{epsfig,cite,indentfirst,oldgerm}
\usepackage{multicol,boxedminipage}
\usepackage{epic,eepic,amsmath,amsthm,amssymb}
\usepackage{epsfig,cite,indentfirst,oldgerm}
\usepackage{multicol,boxedminipage}
\usepackage{array,caption}
\usepackage{enumerate}
\usepackage{booktabs}
\numberwithin{equation}{section}

\newtheorem{theorem}{Theorem}[section]
\newtheorem{definition}[theorem]{Definition}
\newtheorem{proposition}[theorem]{Proposition}

\newtheorem{lemma}[theorem]{Lemma}
\newtheorem{remark}[theorem]{Remark}

\newcommand{\be}{\begin{equation}}
\newcommand{\ee}{\end{equation}}
\newcommand{\bea}{\begin{eqnarray}}
\newcommand{\eea}{\end{eqnarray}}
\newcommand{\ba}{\begin{array}}
	\newcommand{\ea}{\end{array}}
\newcommand{\bean}{\begin{eqnarray*}}
	\newcommand{\eean}{\end{eqnarray*}}

	\def\lprod{\mathop{\prod{\mkern-29.5mu}{\mathbf\longleftarrow}}}
    \def\rprod{\mathop{\prod{\mkern-28.0mu}{\mathbf\longrightarrow}}}
\begin{document}
\title{ Boxed UC plane partitions and the two-site generalized phase model}
\author{Shengyu Zhang, Jinzhou Liu and  Zhaowen Yan$^*$}

\dedicatory {School of Mathematical Sciences, Inner Mongolia University, \\ Hohhot, Inner Mongolia 010030,  P. R. China }

\thanks{*Corresponding author. Email: yanzw@imu.edu.cn (Z.W. Yan)}

\begin{abstract}
This study investigates the connection between boxed UC plane partitions and the two-site generalized phase model. By introducing two maps, we investigate the representation of  two-side generalized phase algebras and  actions of monodromy matrix operators on  basis vectors. The generating function of boxed UC plane partitions is established by the scalar product of the two-site generalized phase model, which can be expressed as products of Schur functions.
It is shown that the generating function  of boxed UC plane partitions is that of UC plane partitions with the double scaling limit.	\\
\textbf{Keywords}: two-site generalized phase model; boxed plane partitions; generating functions of plane partitions; quantum integrable models\\
\textbf{ Mathematics Subject Classifications (2000)}: 17B80, 35Q55, 37K10
\end{abstract}
\maketitle
\tableofcontents
\section{Introduction}
Symmetric functions\cite{Mac} have wide applications in combinatorics\cite{combinatorics},  representation theory \cite{representation} and  integrable models.
Schur functions and Schur Q-functions are $\tau$ functions of the Kadomtsev-Petviashvili (KP) hierarchy and BKP hierarchy\cite{BKP-1,BKP-2,V.G.,Jimbo1981}, respectively. Hall-Littlewood functions reduce to Schur functions and Schur Q-functions with the parameter $t=0$ and $t=-1$. Jing \cite{Jing-B,Jing-F} investigated the vertex operator expression for the Hall-Littlewood polynomials and  the deformed boson-fermion correspondence. Koike \cite{uc} proposed generalized Schur functions called universal characters (UC) indexed with a pair of partitions.
Tsuda\cite{UC} constructed UC hierarchy whose
$\tau$ functions are UC. The charged fermionic description of  the UC hierarchy are also presented. Meanwhile, Ogawa \cite{BUC} constructed the vertex operator realization of
the generalized Schur Q-function and defined  UC hierarchy of B-type.

Considerable attention has been devoted to studying exactly solvable quantum models by the Quantum Inverse Scattering Method (QISM) \cite{Faddeev,Bogoliubov94}. The phase model is a strongly correlated exactly solvable model characterized by interacting phase operators, which corresponds to the crystal limit of the $q$-boson model \cite{q-boson,phase-Correlators}. By using QISM, Bogoliubov et al. \cite{phase-q} presented the correlation functions of the phase model as determinants and solutions are also provided for two correlation functions of the pahse model \cite{Bogoliubov11}.  Tsilevich \cite{q-boson-hall-little} proved that the wave functions of the phase model and the $q$-boson model can be expressed by the Schur functions and Hall-Littlewood functions, respectively, thereby providing a realization in the algebra of symmetric functions. Furthermore, Wang and Li \cite{two-phase}  proposed the two-site generalized phase model and  the $A$-model topological string partition function. Meanwhile, fermion representations of the two-site generalized phase model have been analyzed \cite{fermion-representation-generalized-phase}.

Plane partitions play a crucial role in statistical physics and quantum field theory, such as in direct percolation \cite{Rajesh98}, topological vertex in local Calabi-Yau geometry \cite{Aganagic}, topological strings on Calabi-Yau threefolds and crystal melting model \cite{Okounkov06}. Much interest has been attributed to generating functions of plane partitions\cite{Andrews,Stanley}. By fermion calculus approach,  Foda et al. \cite{Foda09,Foda07} presented generating functions of KP and BKP plane partitions.  Moreover, the generating function of the boxed plane partition has been derived from the scalar product of the phase model \cite{Boxed-q-boson-model}. Shigechi and Uchiyama \cite{Boxed-skew-phase-model} obtained the generating functions of the boxed skew plane partition by parameterizing variables in the generalized scalar product of the phase model. Recently, by  acting free fermion vertex operators on state vectors, the UC, BUC plane partitions and the corresponding generating functions have been investigated \cite{UC-BUC}.
The purpose of this paper is to discuss the  relations  between boxed UC plane partitions and the two-site generalized phase model.

The paper is organized as follows. In Section $2$, we  review plane partitions, charged fermions and the two-side generalized phase model.
The representation of two-side generalized phase algebras and the action of the monodromy matrix operators on  basis vectors have been studied in Section $3$. In Section $4$, based on the scalar product of the two-site generalized phase model, the generating function of  boxed UC plane partitions can be written as  products  of Schur functions. Section $5$ is devoted to the investigation of the two-side generalized phase model on the infinite lattice limit. Furthermore, the generating function of UC plane partitions has been obtained
with the double scaling limit.

\section{Preliminaries}
 In this section, we mainly retrospect some basic facts about plane partitions, charged fermions and the two-side generalized phase model\cite{UC,two-phase,boshilunwen,UC-BUC}. Meanwhile,  state vectors corresponding to  $2-$partitions in $\bar{\mathcal{F}}_{0,0}$ and $\bar{\mathcal{F}}_{0,0}^*$ are defined.
\subsection{Plane partitions}

The decreasing sequence of non-negative integers $\mu=(\mu_{1},\mu_{2},\ldots)$ is called a partition with the weight $|\mu|=\sum\limits_{i\geq1}\mu_i$.
$m_{i}=m_{i}(\mu)$ denotes the number of times that the element $i$ appears in the partition $\mu$,
and $ l(\mu)$  represents the number of nonzero elements in $\mu$.
 The partition $\mu$ can be rewritten as
\begin{align}
\mu=(1^{m_{1}}2^{m_{2}}...).
\end{align}
For a pair of partitions $\mu$ and $\nu=(\nu_{1},\nu_{2},\ldots)$, the symbol $\mu\succ\nu$ means that $\nu$ interlaces $\mu$,
if partitions $\mu$ and $\nu$ satisfy
\begin{eqnarray}\
\mu_{1}\geq\nu_{1}\geq\mu_{2}\geq\nu_{2}\geq\ldots.
\end{eqnarray}
The $2-$partition $\chi$ is denoted as $(\chi)= (\mu,\nu )$  with the weight $|\chi|=|\mu|+|\nu|$.
Setting $(\bar{\chi})= (\bar{\mu},\bar{\nu} )$,  $(\chi)\succ(\bar{\chi}) $ holds
\begin{align}
	(\chi)\succ(\bar{\chi})  \quad\Longleftrightarrow\quad    \mu\succ\bar{\mu} \quad \mathrm{and} \quad \nu \succ\bar{\nu}.
\end{align}

A plane partition $\pi$ is a set of non-negative integers $\pi_{ij}$ which  satisfies
\begin{align}
	\pi_{ij}\geq\pi_{(i+1)j}, \quad\pi_{ij}\geq\pi_{i(j+1)}, \quad \lim_{i\to\infty}\pi_{ij}=\lim_{j\to\infty}\pi_{ij}=0,\quad \text{for}~i,j\geq1.
\end{align}
Each plane partition can be represented as a composition of specific partitions, denoted as
$\pi=(\ldots,\pi_{-1}, \pi_0,\pi_1,\ldots)$.  $\pi_{i}$  is given by
\begin{eqnarray}
	\pi_i=\begin{cases}(\pi_{1(i+1)},\pi_{2(i+2)},\pi_{3(i+3)},\ldots)&\quad\text{for}~i\geq0,\\(\pi_{(-i+1)1},\pi_{(-i+2)2},\pi_{(-i+3)3},\ldots)&\quad\text{for}~i\leq-1,\end{cases}
\end{eqnarray}
then  the plane partition $\pi$ satisfies
\begin{eqnarray}
	\emptyset=\pi_{-M}\prec\cdots\prec\pi_{-2}\prec\pi_{-1}\prec\pi_{0}\succ\pi_{1}\succ\pi_{2}\succ\cdots\succ\pi_{N}=\emptyset,
\end{eqnarray}
for sufficiently large $M,N\in\mathbb{N}^{+}$ and the weight  $|\pi|=\sum\limits_{i=-M}^{N}|\pi_i|$.
Meanwhile, the UC plane partition $\Pi=(\ldots,\chi_{-1},\chi_0,\chi_1,\ldots)$ holds
\begin{eqnarray}\label{UC-define}
	\emptyset=\chi_{-M}\prec\cdots\prec\chi_{-2}\prec\chi_{-1}\prec\chi_{0}\succ\chi_{1}\succ\chi_{2}\succ\cdots\succ\chi_{N}=\emptyset.
\end{eqnarray}

Introduce  the generating function of UC plane partitions
\begin{align}\label{UC-generating-function}
\sum_{\begin{smallmatrix}\pi^{1} \text{ and }\pi^{2}\text{ are}\\\text{plane partitions}\end{smallmatrix}}
p^{|\pi^{1} |}q^{|\pi^{2}|}
=\prod_{n=1}^\infty\left(\frac1{1-p^n}\right)^n\prod_{m=1}^\infty\left(\frac1{1-q^m}\right)^m.
\end{align}

Let $B(N,L,M)$ denote  a three-dimensional integer lattice of size $N\times L\times M$, given by
\begin{align}
B(N,L,M) = \{(i, j, k) \mid 0 \leq i \leq N, \, 0 \leq j \leq L, \, 0 \leq k \leq M, \text{~and~} i, j, k \in \mathbb{N}\}.
\end{align}
The plane partition limited to an $N\times L\times M$ box  is called a boxed plane partition. The generating function can be expressed as
\begin{align}\label{boxed-generating-function}
	\sum_{\pi\in B(N,L,M)}q^{|\pi|}
	=\prod_{i=1}^{N} \prod_{j=1}^{L} \prod_{k=1}^{M} \frac{1 - q^{i+j+k-1}}{1 - q^{i+j+k-2}}
	=
	\prod_{i=1}^N\prod_{j=1}^L\frac{1-q^{i+j+M-1}}{1-q^{i+j-1}}.
\end{align}

\subsection{Charged fermions and state vectors}
The algebra $\mathcal{A}$ over $\mathbb{C}$ is generated by charged fermions $\psi_{m},\psi_{m}^{*}$, $\phi_{m}$ and $\phi_{m}^{*}$ $ (m,n\in\mathbb{Z}+\frac12)$  satisfying
\begin{align}\label{commutation relations 1.1}
	&[\psi_m,\psi_n]_{+} = [\psi_m^*,\psi_n^*]_{+} = 0,[\psi_m,\psi_n^*]_{+} = \delta_{m+n,0},\\
	&[\phi_m,\phi_n]_{+} = [\phi_m^*,\phi_n^*]_{+} = 0,[\phi_m,\phi_n^*]_{+} = \delta_{m+n,0},\\
	&[\psi_{m},\phi_{n}]=[\psi_{m},\phi_{n}^{*}]=[\psi_{m}^{*},\phi_{n}]=[\psi_{m}^{*},\phi_{n}^{*}]=0,
\end{align}
and $\psi_m^2=\psi_m^{*2}=\phi_m^2=\phi_m^{*2}=0$,  where
\begin{align}
[a, b] = ab - ba, \quad [a, b]_+ = ab + ba, \quad 	\delta_{m+n,0}=\begin{cases}1,&m+n=0,\\0,&\mathrm{otherwise}.\end{cases}
\end{align}
The element $a \in \mathcal{A}$ can be written as
\begin{align}	
	a=\psi_{m_1}\cdots\psi_{m_r}\psi_{n_1}^*\cdots\psi_{n_s}^*\phi_{\tilde{m}^{\prime}_1}\cdots\phi_{\tilde{m}^{\prime}_{i}}\phi_{\tilde{n}^{\prime}_1}^*\cdots\phi_{\tilde{n}^{\prime}_{j}}^*,
\end{align}
and the charge of the fermions is as follows
\begin{center}
	\renewcommand{\dashlinestretch}{30}
	\hspace{0cm}
	\begin{tabular}{cccccc}
		\toprule
		\text{Fermion} & $\psi_n$ & $\psi_n^*$ & $\phi_n$ & $\phi_n^*$ \\
		\midrule
		\text{charge}& $(1,0)$ & $(-1,0)$ & $(0,1)$ & $(0,-1)$\\
		\bottomrule
	\end{tabular}
\end{center}

Introduce the charged fermionic Fock space
\begin{align}
	\bar{\mathcal{F}}&\stackrel{\mathrm{def}}{=}\mathcal{A}\cdot| \mathrm{vac}\rangle=\left\{a| \mathrm{vac}\rangle\mid a\in\mathcal{A}\right\},
\end{align}
where the  vacuum state satisfies
\begin{align}\label{UC-vacuum-state}
	\psi_n|\mathrm{vac}\rangle&=\psi_n^*|\mathrm{vac}\rangle=\phi_n|\mathrm{vac}\rangle=\phi_n^*|\mathrm{vac}\rangle=0\quad\text{for }n>0.
\end{align}
$\bar{\mathcal{F}}_{a_1,a_2}$ refers the Fock subspace with charge $(a_1,a_2)$.
Similarly,  the  dual vacuum state is given by
\begin{align}\label{UC-dual-vacuum-state}
	\langle\mathrm{vac}|\psi_n&=\langle\mathrm{vac}|\psi_n^*=\langle\mathrm{vac}|\phi_n=\langle\mathrm{vac}|\phi_n^*=0\quad\text{for }n<0,
\end{align}
the dual Fock space $\bar{\mathcal{F}}^{*}$ is generated by
\begin{align}
	\bar{\mathcal{F}}^{*}&\stackrel{\mathrm{def}}{=}\langle\mathrm{vac}|\cdot\mathcal{A}=\{\langle\mathrm{vac}|a\mid a\in\mathcal{A}\},
\end{align}
and $\bar{\mathcal{F}}_{a_1,a_2}^{*}$ is the dual Fock subspace with charge $(a_1,a_2)$.

Note
\begin{align}
	\langle l_1, l_2 | = \langle \text{vac} | \Psi_{l_1}^* \Phi_{l_2}^*,\quad
	| l_1, l_2 \rangle = \Psi_{l_1} \Phi_{l_2} | \text{vac} \rangle,
\end{align}
where
\begin{align}
\Psi_l^* &= \begin{cases}
	\psi_{1/2} \cdots \psi_{-l-1/2} & \text{for } l < 0, \\
	1 & \text{for } l = 0, \\
	\psi_{1/2}^* \cdots \psi_{l-1/2}^* & \text{for } l > 0,
\end{cases}\\
\Psi_l &= \begin{cases}
	\psi_{l+1/2}^* \cdots \psi_{-1/2}^* & \text{for } l < 0, \\
	1 & \text{for } l = 0, \\
	\psi_{-l+1/2} \cdots \psi_{-1/2} & \text{for } l > 0,
\end{cases}
\end{align}
and $\Phi_l$ and $\Phi_l^*$ can be obtained by replacing $\psi$ and $\psi^*$ with $\phi$ and $\phi^*$.
For $i=1,2$, the following equations hold
\begin{align}
	\langle l_1, l_2 | \psi_{n_1}&=\langle l_1, l_2 | \phi_{n_2}=0 \quad\text{for } n_{i}<-l_{i},\\
	\psi_{n_1}| l_1, l_2 \rangle&=\phi_{n_2}| l_1, l_2 \rangle=0 \quad\text{for } n_{i}>-l_{i},\\
	\langle l_1, l_2 | \psi_{n_1}^* &=\langle l_1, l_2 | \phi_{n_2}^*=0 \quad\text{for } n_{i}<l_{i},\\
	\psi_{n_1}^*| l_1, l_2 \rangle&=\phi_{n_2}^*| l_1, l_2 \rangle=0 \quad\text{for } n_{i}>l_{i}.
\end{align}

Consider a pairing $\langle\quad\rangle:\bar{\mathcal{F}}^{*}\times\bar{\mathcal{F}}\rightarrow\mathbb{C}$ denoted by
\begin{eqnarray}
	\left(\langle\mathrm{vac}|a,b|\mathrm{vac}\rangle \right) &\longmapsto & \langle\mathrm{vac}|a\cdot b|\mathrm{vac}\rangle=\langle ab\rangle,
\end{eqnarray}
where  $\langle\quad\rangle$ is called the vacuum expectation value.

The following properties hold
\begin{eqnarray}
	\langle\mathrm{vac}|\mathrm{vac} \rangle=1,\quad\langle\psi_m\psi_n^*\rangle=\langle\phi_m\phi_n^*\rangle
	=\begin{cases}\delta_{m+n,0}&(m>0),\\0&(\mathrm{otherwise}).\end{cases}\quad
\end{eqnarray}
Define the colon operator $:~:$ as
\begin{align}\label{colon 1}
	:\psi_m\psi_n^*:=\psi_m\psi_n^*-\langle\psi_m\psi_n^*\rangle,\quad:\phi_m\phi_n^*:=\phi_m\phi_n^*-\langle\phi_m\phi_n^*\rangle.
\end{align}
Consider the operators $H_{n}$ and $\tilde{H}_{n}$ $(n\in\mathbb{Z})$,
\begin{eqnarray}\
	H_{n}=\sum_{j\in\mathbb{Z}+1/2}:\psi_{-j}\psi_{j+n}^{*}:,\quad \tilde{H}_{n}=\sum_{j\in\mathbb{Z}+1/2}:\phi_{-j}\phi_{j+n}^{*}:,
\end{eqnarray}
which hold
\begin{align}\label{commutation relations 2.1}
	[H_{n},\psi_{m}]&=\psi_{m+n},\quad[H_{n},\psi_{m}^{*}]=-\psi_{m+n}^{*},\quad [H_{m},H_{n}]=m\delta_{m+n,0},\\
	[\tilde{H}_{n},\phi_{m}]&=\phi_{m+n},\quad[\tilde{H}_{n},\phi_{m}^{*}]=-\phi_{m+n}^{*},\quad
	[\tilde{H}_{m},\tilde{H}_{n}]=m\delta_{m+n,0},
\end{align}
and
\begin{align}
	H_n|\mathrm{vac} \rangle=\tilde{H}_{n}|\mathrm{vac}\rangle=0\quad\text{if } n>0.
\end{align}

Set $\mu^{j}=(\mu^{j}_{1},\mu^{j}_{2},\ldots)$ and $\nu^{j}=(\nu^{j}_{1},\nu^{j}_{2},\ldots)$. For $l_{1},l_{2}<0$, the right state vector corresponding to the $2-$partition $(\chi)=(\mu^1,\mu^2)$ in the fermionic Fock space $\bar{\mathcal{F}}_{0,0}$ is defined as
\begin{align}\label{right-state}
	| \chi \rangle =| \mu^{1},\mu^{2} \rangle =\psi_{m_1^1} \ldots \psi_{m_{-l_1}^1}\phi_{m_1^2} \ldots \phi_{m_{-l_2}^2}| l_{1},l_{2} \rangle,
\end{align}
where 	$m_{i}^j=-(\mu^{j}_{i} - i)- \frac{1}{2}$ and $m_{1}^j< \cdots < m_{l_j}^j\leq -l_j$ for all $1 \leq i \leq -l_j$, $j=1,2$.
Similarly, setting 	$m_{i}^j=(\mu^{j}_{i} - i)+\frac{1}{2}$,
the identified left state vector in $\bar{\mathcal{F}}_{0,0}^*$ is represented as
\begin{align}\label{left-state}
	\langle	 \chi | =\langle \mu^2,\mu^1|=\langle l_2,l_1|  \phi_{m_{-l_2}^2}^* \ldots \phi_{m_1^2}^*  \psi_{m_{-l_1}^1}^* \ldots \psi_{m_1^1}^*,
\end{align}
where $m_{1}^j>\cdots> m_{l_j}^j\geq l_j$  for all  $1 \leq i \leq -l_j$, $j=1,2$.
Note that $n_{i}^j=-(\nu^{j}_{i} - i)- \frac{1}{2}$, $m_{i}^j=(\mu^{j}_{i} - i)+\frac{1}{2}$ and $l_{j}, k_{j}<0$, then
\begin{align}\label{fock-scalar-product}
\big( \langle \mu^2,\mu^1|,| \nu^{1},\nu^{2} \rangle  \big)
=&\langle l_2,l_1|  \phi_{m_{-l_2}^2}^* \ldots \phi_{m_1^2}^*  \psi_{m_{-l_1}^1}^* \ldots \psi_{m_1^1}^*
\psi_{n_1^1} \ldots \psi_{n_{-k_1}^1}\phi_{n_1^2} \ldots \phi_{n_{-k_2}^2}| k_{1},k_{2} \rangle \notag\\
=&\delta_{k_{1},l_{1}} \delta_{k_{2},l_{2}}
\prod_{i=1}^{-l_{1}} \delta_{m_{i}^1,n_{i}^1}
\prod_{j=1}^{-l_{2}} \delta_{m_{j}^2,n_{j}^2}
= \delta_{\mu^1,\nu^1}\delta_{\mu^2,\nu^2},
\end{align}
where $\delta_{\mu^j,\nu^j}$  denotes that $\mu^j$ equals $\nu^j$.



\subsection{The two-site generalized phase model}

The two-site generalized phase algebra  is generated by  $\{ \phi^{(j)}, \phi^{(j)\dagger}, \mathcal{N}^{(j)}, \pi^{(j)}, j=1,2 \}$,  which satisfies
\begin{align}
&[\phi^{(i)},\phi^{(j)\dagger}]=\delta_{i,j}\pi^{(i)},
\quad[\mathcal{N}^{(i)},\phi^{(j)}]=-\delta_{i,j}\phi^{(i)},
\quad[\mathcal{N}^{(i)},\phi^{(j)\dagger}]=\delta_{i,j}\phi^{(i)\dagger},\\
&\phi^{(i)}\pi^{(i)}=\pi^{(i)}\phi^{(i)\dagger}=0,\quad \mathrm{for~}i,j=1,2.
\end{align}
Consider $M_{1}+M_{2}+2$ copies of the two-site generalized phase algebra
represented by $\{ \phi_k^{(j)}, \phi_k^{(j)\dagger}, \mathcal{N}_k^{(j)}, \\ \pi_k^{(j)} \},$ $0\leq k\leq M_j$ .
The algebra between different types and positions is commutative, leading to
\begin{align}\label{generalized-calculate}
	[\phi_{l_{i}}^{(i)},\phi_{k_j}^{(j)\dagger}]&=
\left\{\begin{array}{ll}
	\delta_{l_{i},k_j}\pi_{l_{i}}^{(j)}, &  i=j,
	\\  0,&  i \neq j,
\end{array}
\right.	
	\quad	
	[\mathcal{N}_{l_{i}}^{(i)},\phi_{k_j}^{(j)}]=
	\left\{\begin{array}{ll}
		-\delta_{l_{i},k_{j}}\phi_{l_{i}}^{(j)}, &  i=j,
		\\  0,&  i \neq j,
	\end{array}
	\right.
\\
	[\mathcal{N}_{l_{i}}^{(i)},\phi_{k_j}^{(j)\dagger}]&=
	\left\{\begin{array}{ll}
		\delta_{l_{i},k_{j}}\phi_{l_{i}}^{(j)\dagger}, &  i=j,
		\\  0,&  i \neq j,
	\end{array}
	\right.
	\ \  \phi_{l_{i}}^{(i)}\pi_{l_{i}}^{(i)}=\pi_{l_{i}}^{(i)}\phi_{l_{i}}^{(i)\dagger}=0,
\end{align}
where $0\leq l_{i}, k_{j} \leq M_j$, and $i,j=1,2$.
The $M_1 +1$ and $M_2 +1$ copies of the vector space $\mathcal{V}$ are written as
\begin{align}
\mathcal{V}^{(1)}=\bigotimes_{i=0}^{M_{1}}\mathcal{V}_{i}^{(1)},\quad
\mathcal{V}^{(2)}=\bigotimes_{j=0}^{M_{2}}\mathcal{V}_{j}^{(2)},
\end{align}
where
\begin{align}
	\text{Basis}(\mathcal{V}^{(1)})&=  \left\{|n_{0}^{1}\rangle_0^{(1)}\otimes\cdots\otimes|n_{M_1}^{1}\rangle_{M_1}^{(1)}                  =	\bigotimes_{i=0}^{M_{1}}|n_{i}^{1} \rangle_{i}^{(1)} \right\},\\
\text{Basis}(\mathcal{V}^{(2)})&= \left\{|n_{0}^{2}\rangle_0^{(2)}\otimes\cdots\otimes|n_{M_2}^{2}\rangle_{M_2}^{(2)}      =	\bigotimes_{j=0}^{M_{2}}|n_{j}^{2} \rangle_{j}^{(2)} \right\},
\end{align}
and  $n_i^{j}$ $(j=1,2$ and $0\leq i\leq M_j)$ are called the occupation numbers of $i^{th}$ position of the basis vector and range over all non-negative integers.
We denote  the number of particles as $N_j=\sum\limits_{i=0}^{M_j} n_i^j$.
Then basis vectors of the vector space $\mathcal{V}=\mathcal{V}^{(1)}\bigotimes\mathcal{V}^{(2)}$ is of the form
\begin{align}
\text{Basis}(\mathcal{V})&=
	\left\{|n^{1}\rangle^{(1)}\bigotimes|n^{2}\rangle^{(2)}=	\bigotimes_{i=0}^{M_{1}}|n^{1}_{i}\rangle_{i}^{(1)}\bigotimes\bigotimes_{j=0}^{M_{2}}|n^{2}_{j}\rangle_{j}^{(2)}  \right\}.
\end{align}
The inner product between two basis vectors $|m^{1}\rangle^{(1)}\bigotimes|m^{2}\rangle^{(2)}$  and
$|n^{1}\rangle^{(1)}\bigotimes|n^{2}\rangle^{(2)}$ is defined as
\begin{align}
	\mathcal{I}\left(|m^{1}\rangle^{(1)}\bigotimes|m^{2}\rangle^{(2)}, |n^{1}\rangle^{(1)}\bigotimes|n^{2}\rangle^{(2)}\right) = \prod_{i=1}^{M_{1}} \delta_{m_{i}^1,n_{i}^1}
	\prod_{j=1}^{M_{2}} \delta_{m_{j}^2,n_{j}^2}.
\end{align}
Similarly, the corresponding dual vector space is denoted as $\mathcal{V}^{\ast}=\mathcal{V}^{(1)\ast}\bigotimes\mathcal{V}^{(2)\ast}$ and
\begin{align}
\text{Basis}(\mathcal{V}^{\ast})
	=\left\{
	\prescript{(2)}{}{\langle m^{2}|}\bigotimes \prescript{(1)}{}{\langle m^{1}|}=
	\bigotimes_{j=0}^{M_{2}}\prescript{(2)}{j}{\langle m^{2}_j|}\bigotimes
	\bigotimes_{i=0}^{M_{1}}\prescript{(1)}{i}{\langle m^{1}_i|}
	\right\}.
\end{align}

Introduce the $R$-matrix for the two-side generalized phase model
\begin{align}
R(x,y) = \begin{pmatrix}
	x & 0 & 0 & 0 \\
	0 & 0 & x^{\frac{1}{2}} y^{\frac{1}{2}} & 0 \\
	0 & x^{\frac{1}{2}} y^{\frac{1}{2}} & x - y & 0 \\
	0 & 0 & 0 & x
\end{pmatrix},
\end{align}
the $L-$matrices
\begin{align}\label{L-j}
	L_{i}^{(j)}(x)=
	\begin{pmatrix}
		x^{-\frac{1}{2}} &  \phi_i^{(j)\dagger} \\
		\phi_i^{(j)} & x^{\frac{1}{2}}
	\end{pmatrix}, \quad j=1,2\mathrm{~and~}i=0,\cdots,M_j,
\end{align}
and the monodromy matrix		
\begin{align}
	T(x)=T_{2}(x) \cdot T_{1}(x),
\end{align}
where	
\begin{align}\label{T-j}
	\begin{aligned}
		T_{j}(x) =L_{M_{j}}^{(j)}(x)L_{M_{j}-1}^{(j)}(x)\cdots L_{0}^{(j)}(x)
		=
		\begin{pmatrix}
			A_j(x) & B_j(x) \\
			C_j(x) & D_j(x)
		\end{pmatrix},
	\end{aligned}\quad j=1,2.
\end{align}

Define  operators
\begin{align}
\hat{\mathcal{N}}_1 = \sum_{i=0}^{M_1} \mathcal{N}_i^{(1)}, \quad \hat{\mathcal{N}}_2 = \sum_{i=0}^{M_2} \mathcal{N}_i^{(2)}, \quad \text{and} \quad \hat{\mathcal{N}} = \hat{\mathcal{N}}_1 + \hat{\mathcal{N}}_2.
\end{align}
The operators $A_j(x)$ and $D_j(x)$ do not change the total number of particles. However,  operators $B_j(x)$ and $C_j(x)$ satisfy
\begin{align}
\hat{\mathcal{N}}_j B_j(u) = B_j(u) (\hat{\mathcal{N}}_j + 1), \quad \hat{\mathcal{N}}_j C_j(u) = C_j(u) (\hat{\mathcal{N}}_j - 1),
\end{align}
where we refer to $B_j(x)$ as the creation operators and $C_j(x)$ as the annihilation operators.

The following bilinear equations  hold
\begin{align}
	R(x,y)(T(x)\otimes T(y))&=(T(y)\otimes T(x))R(x,y),\\
	R(x,y)(T_{j}(x)\otimes T_{j}(y))&=(T_{j}(y)\otimes T_{j}(x))R(x,y),\\
	R(x,y)(L_{i}^{(j)}(x)\otimes L_{i}^{(j)}(y))&=(L_{i}^{(j)}(y)\otimes L_{i}^{(j)}(x))R(x,y).\label{bilinear-3}
\end{align}
From the Eq.$(\ref{bilinear-3})$, one can obtain
\begin{align}\label{B,C,commute}
	[B_i(x),B_j(y)]=[C_i(x),C_j(y)]=0, \quad \mathrm{for~}i,j=1,2.
\end{align}		

\section{The action of the monodromy matrix operators on basis vectors}
This section provides the representation of two-side generalized phase algebras and we shall calculate the scalar product of basis vectors.
By using the maps $\mathcal{M}$ and $\mathcal{M}^*$,
we show that the action of monodromy matrix operators on the basis vectors generates interlacing boxed $2-$partitions.
\subsection{The representation of two-side   generalized phase algebras}
Consider the following action of operators $\{ \phi_i^{(j)}, \phi_i^{(j)\dagger}, \mathcal{N}_i^{(j)}, \pi_i^{(j)} \}$ $(j=1,2$ and $0\leq i\leq M_j)$ on the basis vector $\bigotimes\limits_{i=0}^{M_{1}}|n_{i}^{1} \rangle_{i}^{(1)}\bigotimes\bigotimes\limits_{j=0}^{M_{2}}|n_{j}^{2}\rangle_{j}^{(2)} $ in the vector space $\mathcal{V}$.

$\phi_i^{(1)}$ annihilates the basis vector or reduces the $i^{th}$ occupation number of the basis vector by $1$ with the relation
\begin{align}\label{right-ym}
	&\phi_i^{(1)} \bigotimes_{i=0}^{M_{1}}|n_{i}^{1} \rangle_{i}^{(1)}\bigotimes\bigotimes_{j=0}^{M_{2}}|n_{j}^{2}\rangle_{j}^{(2)}   \notag\\
	=&\left\{
	\begin{array}
		{ll}0, & n_{i}^{1}=0, \\
		\\
		|n_{0}^{1} \rangle_0^{(1)}\otimes\cdots\otimes|n_{i}^{1}-1\rangle_i^{(1)}\otimes\cdots\otimes|n_{M_1}^{1}\rangle_{M_1}^{(1)}\bigotimes\bigotimes\limits_{j=0}^{M_{2}}|n_{j}^{2}\rangle_{j}^{(2)}, & n_{i}^{1}\geq1.
	\end{array}\right.
\end{align}	
$\phi_i^{(1)\dagger}$ increases the $i^{th}$ occupation number of the basis vector by $1$ with
\begin{align}\label{right-sc}
	\phi_i^{(1)\dagger}
	\bigotimes_{i=0}^{M_{1}}|n_{i}^{1} \rangle_{i}^{(1)}\bigotimes\bigotimes_{j=0}^{M_{2}}|n_{j}^{2}\rangle_{j}^{(2)} 		
	=|n_{0}^{1}\rangle_0^{(1)}\otimes\cdots\otimes|n_{i}^{1}+1\rangle_i^{(1)}\otimes\cdots\otimes|n_{M_1}^{1}\rangle_{M_1}^{(1)}\bigotimes\bigotimes\limits_{j=0}^{M_{2}}|n_{j}^{2}\rangle_{j}^{(2)}.
\end{align}	
The basis vector is the eigenvector of $\mathcal{N}_i^{(1)}$ with the $i^{th}$ occupation number as its eigenvalue,
\begin{align}\label{right-eigenvalue}
	\mathcal{N}_i^{(1)}
	\bigotimes_{i=0}^{M_{1}}|n_{i}^{1} \rangle_{i}^{(1)}\bigotimes\bigotimes_{j=0}^{M_{2}}|n_{j}^{2}\rangle_{j}^{(2)}  	
	=n_{i}^{1}
\bigotimes_{i=0}^{M_{1}}|n_{i}^{1} \rangle_{i}^{(1)}\bigotimes\bigotimes_{j=0}^{M_{2}}|n_{j}^{2}\rangle_{j}^{(2)} .
\end{align}	
The action of $\pi_i^{(1)} $ on the basis vector is given by
\begin{align}\label{right-pi}
	\pi_i^{(1)}
	\bigotimes_{i=0}^{M_{1}}|n_{i}^{1}\rangle_{i}^{(1)}\bigotimes\bigotimes_{j=0}^{M_{2}}|n_{j}^{2}\rangle_{j}^{(2)}  		
	=\left\{
	\begin{array}
		{ll}|n_{0}^{1} \rangle_0^{(1)}\otimes\cdots\otimes|n_{M_1}^{1}\rangle_{M_1}^{(1)}\bigotimes\bigotimes\limits_{j=0}^{M_{2}}|n_{j}^{2}\rangle_{j}^{(2)}  , & n_{i}^{1}=0, \\
		\\
		0, & n_{i}^{1}\geq1.
	\end{array}\right.
\end{align}	
Meanwhile, the actions of $\phi_i^{(1)}$ and $\phi_i^{(1)\dagger}$ get interchanged in the dual vector space $\mathcal{V}^{\ast}$ with the following relations
\begin{align}	
&\bigotimes_{j=0}^{M_{2}}\prescript{(2)}{j}{\langle m^{2}_j|}\bigotimes
\bigotimes_{i=0}^{M_{1}}\prescript{(1)}{i}{\langle m^{1}_i|}
\phi_i^{(1)}\notag\\
=&	\bigotimes\limits_{j=0}^{M_{2}}\prescript{(2)}{j}{\langle m^{2}_j|}\bigotimes	
\prescript{(1)}{0}{\langle m^{1}_0|}\otimes\cdots\otimes
\prescript{(1)}{i}{\langle m^{1}_{i}+1|}
\otimes\cdots\otimes\prescript{(1)}{M_1}{\langle m^{1}_{M_1}|},\label{left-sc}
\end{align}
and
\begin{align}
	&\bigotimes_{j=0}^{M_{2}}\prescript{(2)}{j}{\langle m^{2}_j|}\bigotimes
	\bigotimes_{i=0}^{M_{1}}\prescript{(1)}{i}{\langle m^{1}_i|}
	\phi_i^{(1)\dagger} \notag\\
	=&\left\{
	\begin{array}
		{ll}0, & m^{1}_i=0, \\
		\\
	\bigotimes\limits_{j=0}^{M_{2}}\prescript{(2)}{j}{\langle m^{2}_j|}\bigotimes	
		\prescript{(1)}{0}{\langle m^{1}_0|}\otimes\cdots\otimes
		\prescript{(1)}{i}{\langle m^{1}_{i}-1|}
		\otimes\cdots\otimes\prescript{(1)}{M_1}{\langle m^{1}_{M_1}|},
		& m^{1}_i\geq1.
	\end{array}\right.\label{left-ym}
\end{align}	
 The action of   $\mathcal{N}_i^{(1)}$ and $\pi_i^{(1)} $  on the dual vector space $\mathcal{V}^{\ast}$ are given by
\begin{align}
	&\bigotimes_{j=0}^{M_{2}}\prescript{(2)}{j}{\langle m^{2}_j|}\bigotimes
	\bigotimes_{i=0}^{M_{1}}\prescript{(1)}{i}{\langle m^{1}_i|}
	\mathcal{N}_i^{(1)}
	=m^{1}_i
	\bigotimes\limits_{j=0}^{M_{2}}\prescript{(2)}{j}{\langle m^{2}_j|}\bigotimes
	\prescript{(1)}{0}{\langle m^{1}_0|}\otimes\cdots\otimes\prescript{(1)}{M_1}{\langle m^{1}_{M_1}|},\label{left-eigenvalue}\\
&\bigotimes_{j=0}^{M_{2}}\prescript{(2)}{j}{\langle m^{2}_j|}\bigotimes
\bigotimes_{i=0}^{M_{1}}\prescript{(1)}{i}{\langle m^{1}_i|}
\pi_i^{(1)}
=\left\{
\begin{array}
	{ll}	\bigotimes\limits_{j=0}^{M_{2}}\prescript{(2)}{j}{\langle m^{2}_j|}\bigotimes
	\prescript{(1)}{0}{\langle m^{1}_0|}\otimes\cdots\otimes\prescript{(1)}{M_1}{\langle m^{1}_{M_1}|}, & m^{1}_i=0, \\
	\\
	0, & m^{1}_i\geq1.
\end{array}\right.	\label{left-pi}	
\end{align}	
It should be pointed out that
$\{ \phi_i^{(2)}, \phi_i^{(2)\dagger}, \mathcal{N}_i^{(2)}, \pi_i^{(2)} \}$
and
$\{ \phi_i^{(1)}, \phi_i^{(1)\dagger}, \mathcal{N}_i^{(1)}, \pi_i^{(1)} \}$
act on the basis vectors with the  similar rules. Hence, we will not  repeat them here.	
\begin{lemma}
The action  of the basis vectors $\langle m^{2}|^{(2)}\bigotimes \langle m^{1}|^{(1)}\in \mathcal{V}^{\ast}$ on $|n^{1}\rangle^{(1)}\bigotimes|n^{2}\rangle^{(2)} \in \mathcal{V} $
satisfies
\begin{align}\label{tgpm-scalar-product}
	\prescript{(2)}{}{\langle m^{2}|}\bigotimes \prescript{(1)}{}{\langle m^{1}|}
	n^{1}\rangle^{(1)}\bigotimes|n^{2}\rangle^{(2)}
	= \prod_{j=1}^{2}\prod_{i=0}^{M_j} \delta_{m_i^j, n_i^j}.
\end{align}
\end{lemma}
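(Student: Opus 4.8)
The plan is to establish the identity by induction on the lattice sizes $M_1$ and $M_2$, peeling off one tensor factor at a time and using the elementary action rules for the generators. First I would observe that both sides of \eqref{tgpm-scalar-product} factor through the two copies: since the operators indexed by $j=1$ and $j=2$ commute and act on disjoint tensor factors (by \eqref{generalized-calculate}), it suffices to prove the single-copy statement
\begin{align*}
\Bigl(\bigotimes_{i=0}^{M}\prescript{(1)}{i}{\langle m_i|}\Bigr)\Bigl(\bigotimes_{i=0}^{M}|n_i\rangle_i^{(1)}\Bigr)=\prod_{i=0}^{M}\delta_{m_i,n_i},
\end{align*}
and then take the product over the two copies; the cross pairings vanish in neither a trivial nor a nontrivial way because they are simply the defining orthogonality of the tensor-product basis. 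So the real content is the one-copy orthonormality of the $|n\rangle_i$.

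Next, for the one-copy statement I would proceed by induction on $M$. The base case $M=0$ is the single-site statement $\prescript{}{0}{\langle m_0|}\,|n_0\rangle_0 = \delta_{m_0,n_0}$, which I would derive from the action rules: starting from $|0\rangle_0$ one builds $|n_0\rangle_0 = (\phi_0^{(1)\dagger})^{n_0}|0\rangle_0$ by \eqref{right-sc}, and dually $\prescript{}{0}{\langle m_0|}$ is obtained by applying $\phi_0^{(1)}$ repeatedly to $\prescript{}{0}{\langle 0|}$ via \eqref{left-sc}; using \eqref{left-pi} (the $\pi$-action detecting occupation zero) together with the relation $\phi^{(1)}\pi^{(1)}=\pi^{(1)}\phi^{(1)\dagger}=0$ one checks that contracting $m_0$ lowering operators against $n_0$ raising operators gives $1$ when $m_0=n_0$ and $0$ otherwise — precisely the Fock-space-style computation already carried out for the charged fermions in \eqref{fock-scalar-product}, but here for the commuting phase oscillators. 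For the inductive step, write the basis vectors as $\prescript{}{M}{\langle m_M|}\otimes(\text{rest})$ and $|n_M\rangle_M\otimes(\text{rest})$; the pairing on the $M$-th slot gives $\delta_{m_M,n_M}$ by the base case applied to that slot (the operators on slot $M$ commute with everything on slots $0,\dots,M-1$), leaving the pairing of the remaining $M$-site vectors, to which the induction hypothesis applies.

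The main obstacle — and it is more bookkeeping than genuine difficulty — is pinning down exactly how the dual basis vector $\prescript{(1)}{i}{\langle m^1_i|}$ is generated from $\prescript{(1)}{i}{\langle 0|}$, since \eqref{left-sc}–\eqref{left-ym} show that in the dual space the roles of $\phi_i^{(1)}$ and $\phi_i^{(1)\dagger}$ are interchanged; one must be careful that $\prescript{(1)}{i}{\langle m^1_i|} = \prescript{(1)}{i}{\langle 0|}\,(\phi_i^{(1)})^{m^1_i}$ rather than the naively expected raising operator, and then verify the normalization so that no spurious combinatorial factors appear. Once the single-site contraction $\prescript{}{i}{\langle m_i|}\,|n_i\rangle_i=\delta_{m_i,n_i}$ is secured — equivalently, once one checks it is consistent with the inner product $\mathcal{I}$ defined earlier — the rest is a routine disentangling of tensor factors using commutativity, and the general formula \eqref{tgpm-scalar-product} follows by multiplying the per-slot Kronecker deltas over $i=0,\dots,M_j$ and $j=1,2$.
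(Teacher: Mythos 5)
Your proposal is correct and follows essentially the same route as the paper: the paper likewise writes $|n^{1}\rangle^{(1)}\bigotimes|n^{2}\rangle^{(2)}$ as an ordered product of the $\phi_i^{(j)\dagger}$ acting on the vacuum and the dual vector as the dual vacuum times a product of the $\phi_i^{(j)}$, then contracts using the relations \eqref{generalized-calculate} together with $\phi_i^{(j)}|0\rangle=\langle 0|\phi_i^{(j)\dagger}=0$. Your explicit induction on the number of sites merely formalizes the site-by-site factorization that the paper performs in one step, so the two arguments are the same in substance.
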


\begin{proof}
By means of the Eq.$(\ref{right-sc})$, one can obtain
\begin{align}
	&|n^{1}\rangle^{(1)}\bigotimes|n^{2}\rangle^{(2)}\notag\\
	=&
	\bigotimes_{i=0}^{M_{1}}|n^{1}_{i}\rangle_{i}^{(1)}\bigotimes\bigotimes_{j=0}^{M_{2}}|n^{2}_{j}\rangle_{j}^{(2)} \nonumber \\
	=&(\phi_{0}^{(1)\dagger})^{n_{0}^{1}}\cdots(\phi_{M_1}^{(1)\dagger})^{n_{M_1}^{1}}
	(\phi_{0}^{(2)\dagger})^{n_{0}^{2}}\cdots(\phi_{M_2}^{(2)\dagger})^{n_{M_2}^{2}}
	|0\rangle_0^{(1)}\otimes\cdots\otimes|0\rangle_{M_1}^{(1)}\otimes
	|0\rangle_0^{(2)}\otimes\cdots\otimes|0\rangle_{M_2}^{(2)}\notag\\
	=&\lprod_{l=0}^{M_1}\phi_{l}^{(1)\dagger} \lprod_{k=0}^{M_2}\phi_{k}^{(2)\dagger}
	|0\rangle^{(1)}\bigotimes|0\rangle^{(2)},
\end{align}
where the indices decrease in the direction of the arrow in the product symbol.
Similarly, from the Eq.$(\ref{left-sc})$, we have
\begin{align}
	\prescript{(2)}{}{\langle m^{2}|}\bigotimes \prescript{(1)}{}{\langle m^{1}|}=\prescript{(2)}{}{\langle 0|}\bigotimes \prescript{(1)}{}{\langle0|}
	\lprod_{s=0}^{M_1}\phi_{s}^{(1)} \lprod_{t=0}^{M_2}\phi_{t}^{(2)}.
\end{align}
Then
\begin{align}
	&\prescript{(2)}{}{\langle m^{2}|}\bigotimes \prescript{(1)}{}{\langle m^{1}|}
	n^{1}\rangle^{(1)}\bigotimes|n^{2}\rangle^{(2)}\notag\\
	= &\prescript{(2)}{}{\langle 0|}\bigotimes \prescript{(1)}{}{\langle0|}
	\lprod_{s=0}^{M_1}\phi_{s}^{(1)} \lprod_{t=0}^{M_2}\phi_{t}^{(2)}
	\lprod_{l=0}^{M_1}\phi_{l}^{(1)\dagger} \lprod_{k=0}^{M_2}\phi_{k}^{(2)\dagger}
	|0\rangle^{(1)}\bigotimes|0\rangle^{(2)}.
\end{align}
Combining the Eq.(\ref{generalized-calculate}) and the fact $\phi_i^{(j)} |0\rangle = \langle 0| \phi_i^{({j})\dagger} = 0$ yields
\begin{align}
	\prescript{(2)}{}{\langle m^{2}|}\bigotimes \prescript{(1)}{}{\langle m^{1}|}
	n^{1}\rangle^{(1)}\bigotimes|n^{2}\rangle^{(2)}
	= \prod_{j=1}^{2}\prod_{i=0}^{M_j} \delta_{m_i^j, n_i^j}.
\end{align}
\end{proof}


\subsection{Generating interlacing boxed $2-$partitions}
Set $| n^{1}\rangle^{(1)}\bigotimes|n^{2}\rangle^{(2)}$ and $\prescript{(2)}{}{\langle n^{2}|}\bigotimes \prescript{(1)}{}{\langle n^{1}|}$ are  basis vectors in $\mathcal{V}$ and $\mathcal{V}^*$, respectively.
\begin{definition}\label{M}
Define the linear maps
\begin{align}
	\mathcal{M} : \mathcal{V} \to \bar{\mathcal{F}}_{0,0}, \quad
	\mathcal{M}^* : \mathcal{V}^* \to \bar{\mathcal{F}}_{0,0}^{*},
\end{align}
which follow
\begin{align}\label{M-map}
	\mathcal{M}  | n^{1}\rangle^{(1)}\bigotimes|n^{2}\rangle^{(2)}&=| \nu^{1},\nu^{2} \rangle =| \chi \rangle,\\
	\prescript{(2)}{}{\langle n^{2}|}\bigotimes \prescript{(1)}{}{\langle n^{1}|}  \mathcal{M}^* &= \langle \nu^{2},\nu^{1}| =\langle \chi|,
\end{align}
where
\begin{align}
	\nu^{j}=(0^{n^{j}_0}  1^{n^{j}_1}  \cdots  M_{j}^{n^{j}_{M_{j}}})=(1^{n^{j}_1}  \cdots  M_{j}^{n^{j}_{M_{j}}}),\quad j=1,2.
\end{align}	
\end{definition}

For all $\prescript{(2)}{}{\langle m^{2}|}\bigotimes \prescript{(1)}{}{\langle m^{1}|} \in \mathcal{V}^*$ and $|n^{1}\rangle^{(1)}\bigotimes|n^{2}\rangle^{(2)} \in \mathcal{V}$ satisfying $m^{j}_0=n^{j}_0$,
as can be seen from  Eqs.$(\ref{fock-scalar-product})$ and $(\ref{tgpm-scalar-product})$ that
\begin{align}\label{scalar-product-equal}
\prescript{(2)}{}{\langle m^{2}|}\bigotimes \prescript{(1)}{}{\langle m^{1}|}
n^{1}\rangle^{(1)}\bigotimes|n^{2}\rangle^{(2)}
=\big(\prescript{(2)}{}{\langle m^{2}|}\bigotimes \prescript{(1)}{}{\langle m^{1}|}\mathcal{M}^*,\mathcal{M}| n^{1}\rangle^{(1)}\bigotimes|n^{2}\rangle^{(2)} \big).
\end{align}
For basis vectors $| m^{1}\rangle^{(1)}\bigotimes|m^{2}\rangle^{(2)}$ and $| n^{1}\rangle^{(1)}\bigotimes|n^{2}\rangle^{(2)}$ of the space $\mathcal{V}$,
let us denote
\begin{align}
	\Sigma_i^{m^j} = \sum_{k=i}^{M_j} m_k^j, \quad \Sigma_i^{n^j} = \sum_{k=i}^{M_j} n_k^j,\quad j=1,2  \text{~and~} 0\leq i \leq M_{j}.
\end{align}
The symbol $| m^{j}\rangle^{(j)} \triangleright   |   n^{j}\rangle^{(j)}$ means that $| m^{j}\rangle^{(j)}$ is admissible to   $ | n^{j}\rangle^{(j)}$, if and only if
\begin{align}\label{admissible-condition}
	(\Sigma_0^{m^j} - \Sigma_0^{n^j}) = 1\quad \text{and} \quad 0 \leq (\Sigma_i^{m^j} - \Sigma_i^{n^j}) \leq 1, \quad j=1,2  \text{~and~} 0\leq i \leq M_{j}.
\end{align}
Similarly,
we say that  $\prescript{(j)}{}{\langle m^{j}|}$  is admissible to $\prescript{(j)}{}{\langle n^{j}|}$ and write $ \prescript{(j)}{}{\langle n^{j}|}  \triangleleft  \prescript{(j)}{}{\langle m^{j}|}$, if basis vectors $\prescript{(2)}{}{\langle m^{2}|}\bigotimes \prescript{(1)}{}{\langle m^{1}|}$ and $\prescript{(2)}{}{\langle n^{2}|}\bigotimes \prescript{(1)}{}{\langle n^{1}|} $ of the space $\mathcal{V}^*$ satisfy the Eq.(\ref{admissible-condition}).
It is noted that
\begin{align}\label{admiss}
&| m^{1}\rangle^{(1)}\bigotimes|m^{2}\rangle^{(2)} \triangleright  | n^{1}\rangle^{(1)}\bigotimes|n^{2}\rangle^{(2)}
\Longleftrightarrow
| m^{1}\rangle^{(1)} \triangleright   | n^{1}\rangle^{(1)} \text{ and }
| m^{2}\rangle^{(2)} \triangleright   |   n^{2}\rangle^{(2)},\\
&\prescript{(2)}{}{\langle n^{2}|}\bigotimes \prescript{(1)}{}{\langle n^{1}|}
\triangleleft
\prescript{(2)}{}{\langle m^{2}|}\bigotimes \prescript{(1)}{}{\langle m^{1}|}
\Longleftrightarrow
 \prescript{(1)}{}{\langle n^{1}|}  \triangleleft  \prescript{(1)}{}{\langle m^{1}|}
\text{ and }
\prescript{(2)}{}{\langle n^{2}|}  \triangleleft  \prescript{(2)}{}{\langle m^{2}|}.
\end{align}

\begin{lemma}\label{xr-jc-1}
	If the maps satisfy the relations
	\begin{align}
&|\mu^{1}, \mu^{2}\rangle = \mathcal{M} |m^{1}\rangle^{(1)}\bigotimes|m^{2}\rangle^{(2)},
\quad | \nu^{1},\nu^{2} \rangle = \mathcal{M} | n^{1}\rangle^{(1)}\bigotimes|n^{2}\rangle^{(2)},\\
&\prescript{(2)}{}{\langle m^{2}|}\bigotimes \prescript{(1)}{}{\langle m^{1}|}\mathcal{M}^* = \langle\mu^{2},\mu^{1}| ,
 \quad  \prescript{(2)}{}{\langle n^{2}|}\bigotimes \prescript{(1)}{}{\langle n^{1}|} \mathcal{M}^* = \langle \nu^{2},\nu^{1} |,
	\end{align}
then we have
\begin{align}
	&| m^{1}\rangle^{(1)}\bigotimes|m^{2}\rangle^{(2)} \triangleright  | n^{1}\rangle^{(1)}\bigotimes|n^{2}\rangle^{(2)}
	\implies
	\mu^{1}  \succ  \nu^{1} ,\mu^{2}  \succ  \nu^{2} , \label{r-admiss-interlace}\\
	&\prescript{(2)}{}{\langle n^{2}|}\bigotimes \prescript{(1)}{}{\langle n^{1}|}
	 \triangleleft
	 \prescript{(2)}{}{\langle m^{2}|}\bigotimes \prescript{(1)}{}{\langle m^{1}|}
	\implies  \nu^{1}  \prec  \mu^{1} , \nu^{2} \prec  \mu^{2}.	\label{l-admiss-interlace}
\end{align}	
\end{lemma}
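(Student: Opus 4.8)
The plan is to reduce the tensor-product statement to its single-type components using the equivalence (\ref{admiss}), and then to translate the ``admissibility'' inequalities on occupation-number partial sums directly into the interlacing inequalities on the associated partitions via the explicit formula $\nu^{j}=(1^{n^{j}_{1}}\cdots M_{j}^{n^{j}_{M_{j}}})$ from Definition~\ref{M}. By (\ref{admiss}) it suffices to prove: for a single index $j$, if $|m^{j}\rangle^{(j)}\triangleright|n^{j}\rangle^{(j)}$ and $\mathcal{M}$ sends these to $\mu^{j},\nu^{j}$ respectively (in the obvious single-type sense), then $\mu^{j}\succ\nu^{j}$; the dual statement (\ref{l-admiss-interlace}) is handled identically after noting that the map $\mathcal{M}^{*}$ attaches the \emph{same} partition to $\prescript{(j)}{}{\langle m^{j}|}$ as $\mathcal{M}$ does to $|m^{j}\rangle^{(j)}$, and that the dual admissibility relation $\triangleleft$ is defined by exactly the same inequalities (\ref{admissible-condition}), so the two implications are literally the same computation. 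So I would prove (\ref{r-admiss-interlace}) carefully and then remark that (\ref{l-admiss-interlace}) follows verbatim.

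The core of the argument is a bookkeeping lemma relating $\Sigma_{i}^{n^{j}}=\sum_{k=i}^{M_{j}}n_{k}^{j}$ to the parts of $\nu^{j}$. Since $\nu^{j}$ lists the integer $k$ with multiplicity $n^{j}_{k}$ for $k=1,\dots,M_{j}$, the number of parts of $\nu^{j}$ that are $\geq i$ is precisely $\Sigma_{i}^{n^{j}}$; equivalently, writing $\nu^{j}$ in weakly decreasing order $\nu^{j}_{1}\geq\nu^{j}_{2}\geq\cdots$, one has the conjugate-partition-type identity
\begin{align}\label{conjugate-identity}
\nu^{j}_{r}\geq i \quad\Longleftrightarrow\quad r\leq \Sigma_{i}^{n^{j}},\qquad 1\leq i\leq M_{j}.
\end{align}
The same holds for $\mu^{j}$ with $m$ in place of $n$. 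Now recall $\mu\succ\nu$ means $\mu_{1}\geq\nu_{1}\geq\mu_{2}\geq\nu_{2}\geq\cdots$, which is equivalent to the pair of conditions $\mu_{r}\geq\nu_{r}$ and $\nu_{r}\geq\mu_{r+1}$ for all $r\geq1$. Using (\ref{conjugate-identity}) these translate into: for every level $i$, the count $\Sigma_{i}^{\mu}$ of $\mu$-parts $\geq i$ is at least the count $\Sigma_{i}^{\nu}$ of $\nu$-parts $\geq i$ (this gives $\mu_{r}\geq\nu_{r}$), and $\Sigma_{i}^{\nu}\geq \Sigma_{i}^{\mu}-1$, i.e.\ $\Sigma_{i}^{\mu}-\Sigma_{i}^{\nu}\leq 1$ (this gives $\nu_{r}\geq\mu_{r+1}$). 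But these are exactly the inequalities $0\leq \Sigma_{i}^{m^{j}}-\Sigma_{i}^{n^{j}}\leq 1$ packaged in the admissibility condition (\ref{admissible-condition}) — the extra condition $\Sigma_{0}^{m^{j}}-\Sigma_{0}^{n^{j}}=1$ merely records $|\mu^{j}|=|\nu^{j}|+1$ and is not needed for interlacing. Hence admissibility implies $\mu^{j}\succ\nu^{j}$, and combining over $j=1,2$ via (\ref{admiss}) gives (\ref{r-admiss-interlace}).

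The step I expect to be the main obstacle is making (\ref{conjugate-identity}) and the translation ``$\mu\succ\nu\iff 0\le\Sigma_i^{\mu}-\Sigma_i^{\nu}\le 1$ for all $i$'' fully rigorous at the boundary indices, where one must treat $i=0$ and the top index $M_{j}$ with care (e.g.\ parts equal to $0$ are dropped from $\nu^{j}$, and one should verify nothing is lost by starting the product at $k=1$), and where the interlacing chain for partitions of different lengths requires padding with zeros. Once the elementary identity counting parts $\geq i$ is set up cleanly, the equivalence is a routine unwinding of definitions, so I would state (\ref{conjugate-identity}) as a short sub-claim with a one-line justification and then read off both implications.
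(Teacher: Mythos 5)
Your proof is correct, but it is genuinely more self-contained than the paper's. The paper's entire argument for the single-type implication $|m^{j}\rangle^{(j)}\triangleright|n^{j}\rangle^{(j)}\implies\mu^{j}\succ\nu^{j}$ is a citation of Lemma 1 in Section 3.2.4 of Wheeler's thesis \cite{boshilunwen}, followed by exactly the reduction via Eq.~(\ref{admiss}) that you also use; the dual case is dismissed as ``similar,'' as in your write-up. What you add is an actual proof of the cited combinatorial fact: the observation that $\Sigma_{i}^{n^{j}}$ counts the parts of $\nu^{j}$ that are $\geq i$ (i.e.\ is the conjugate partition $(\nu^{j})'_{i}$ for $1\leq i\leq M_{j}$), together with the standard equivalence that $\mu\succ\nu$ holds iff $0\leq\mu'_{i}-\nu'_{i}\leq1$ for all $i$ (containment plus the horizontal-strip condition). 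That translation is accurate, and your boundary analysis ($i=0$ versus $i\geq1$, zero parts being dropped, padding with zeros) is exactly where the care is needed; your treatment is sound. One small slip: the condition $\Sigma_{0}^{m^{j}}-\Sigma_{0}^{n^{j}}=1$ does not record $|\mu^{j}|=|\nu^{j}|+1$ (the weight is $\sum_{k}k\,n_{k}^{j}$); it records that the total occupation number, i.e.\ the number of parts of $\mu^{j}$ counted with the zero parts $m_{0}^{j}$, increases by one. Since you correctly do not use this condition in deriving interlacing, the error is harmless, but the parenthetical should be fixed. Net comparison: the paper's route is shorter but opaque without the external reference; yours makes the lemma independent of \cite{boshilunwen} at the cost of half a page, and the conjugate-partition identity you isolate is reusable (it is the same bookkeeping that underlies Lemma~\ref{B-n,C-n,2}).
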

\begin{proof}
From the Lemma $1$ in the Section $3.2.4$ of the reference \cite{boshilunwen}, we have
\begin{align}
| m^{1}\rangle^{(1)} \triangleright	|n^{1}\rangle^{(1)}
\implies \mu^{1}  \succ  \nu^{1},\\
|m^{2}\rangle^{(2)}  \triangleright |n^{2}\rangle^{(2)}
\implies \mu^{2}  \succ  \nu^{2}.
\end{align}
The Eq.$(\ref{admiss})$ implies that the Eq.$(\ref{r-admiss-interlace})$ holds.
Using the similar procedure, the Eq.$(\ref{l-admiss-interlace})$ can also be proved.

\end{proof}

\begin{lemma}\label{B-n,C-n,2}
For $j=1,2$, setting $\bar{\mathbb{B}}_{j}(x) = x^{\frac{M_{j}}{2}} B_{j}(x)$ and $\bar{\mathbb{C}}_{j}(x) = x_{j}^{\frac{M_{j}}{2}} C(\frac{1}{x_{j}})$, the actions of  $\bar{\mathbb{B}}_{j}(x)$ and $\bar{\mathbb{C}}_{j}(x)$ on
$| n^{1}\rangle^{(1)}\bigotimes|n^{2}\rangle^{(2)}$  and $ \prescript{(2)}{}{\langle n^{2}|}\bigotimes \prescript{(1)}{}{\langle n^{1}|}$  satisfy
	\begin{align}
\bar{\mathbb{B}}_{1}(x)\bar{\mathbb{B}}_{2}(y)|n^{1}\rangle^{(1)}\bigotimes|n^{2}\rangle^{(2)}
&= \sum_{{ |m^{1} \rangle^{(1)} \triangleright | n^{1} \rangle^{(1)} }\atop{|m^{2} \rangle^{(2)} \triangleright | n^{2} \rangle^{(2)} }}
	 \prod_{i=1}^{M_1}   x^{i(m^{1}_i - n^{1}_i)} \prod_{j=1}^{M_2}	 y^{i(m^{2}_j - n^{2}_j)}
		| m^{1}\rangle^{(1)}\bigotimes|m^{2}\rangle^{(2)}, \label{1-B-n,C-n,2}  \\
\prescript{(2)}{}{\langle n^{2}|}\bigotimes \prescript{(1)}{}{\langle n^{1}|}
\bar{\mathbb{C}}_{2}(y)\bar{\mathbb{C}}_{1}(x)
 &= \sum_{{ \prescript{(1)}{}{\langle n^{1}|} \triangleleft \prescript{(1)}{}{\langle m^{1}|}  }\atop{\prescript{(2)}{}{\langle n^{2}|}  \triangleleft \prescript{(2)}{}{\langle m^{2}|} }}
\prod_{i=1}^{M_1}   x^{i(m^{1}_i - n^{1}_i)} \prod_{j=1}^{M_2}	 y^{j(m^{2}_j - n^{2}_j)}
 \prescript{(2)}{}{\langle m^{2}|}\bigotimes \prescript{(1)}{}{\langle m^{1}|} .\label{2-B-n,C-n,2}
	\end{align}
\end{lemma}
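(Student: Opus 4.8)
The plan is to reduce both identities to the one-type case --- i.e.\ to the ordinary phase model on $M_j+1$ sites --- and then glue the two copies together using the product structure $\mathcal{V}=\mathcal{V}^{(1)}\otimes\mathcal{V}^{(2)}$ and the factorization \eqref{admiss} of the admissibility relation.

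First I would note that $B_j(x)$ and $C_j(x)$, being the off-diagonal entries of $T_j(x)=L_{M_j}^{(j)}(x)\cdots L_0^{(j)}(x)$, are polynomial expressions in the operators $\{\phi_i^{(j)},\phi_i^{(j)\dagger}\}_{0\le i\le M_j}$ alone. Hence, by the commutativity of operators carrying distinct type labels in \eqref{generalized-calculate}, $\bar{\mathbb{B}}_1(x)$, $\bar{\mathbb{C}}_1(x)$ act only on the tensor factor $\mathcal{V}^{(1)}$ and $\bar{\mathbb{B}}_2(y)$, $\bar{\mathbb{C}}_2(y)$ act only on $\mathcal{V}^{(2)}$; in particular
\[
\bar{\mathbb{B}}_1(x)\bar{\mathbb{B}}_2(y)\bigl(|n^1\rangle^{(1)}\otimes|n^2\rangle^{(2)}\bigr)=\bigl(\bar{\mathbb{B}}_1(x)|n^1\rangle^{(1)}\bigr)\otimes\bigl(\bar{\mathbb{B}}_2(y)|n^2\rangle^{(2)}\bigr),
\]
and dually on the bra side. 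This reduces \eqref{1-B-n,C-n,2} to the one-type statement
\[
\bar{\mathbb{B}}_j(x)\,|n^j\rangle^{(j)}=\sum_{|m^j\rangle^{(j)}\triangleright|n^j\rangle^{(j)}}\Bigl(\prod_{i=1}^{M_j}x^{\,i(m^j_i-n^j_i)}\Bigr)\,|m^j\rangle^{(j)}
\]
and \eqref{2-B-n,C-n,2} to its dual analogue for $\bar{\mathbb{C}}_j$.

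To establish the one-type formula I would use the recursion $T_j^{[k]}(x)=L_k^{(j)}(x)\,T_j^{[k-1]}(x)$ for the truncated products $T_j^{[k]}(x)=L_k^{(j)}(x)\cdots L_0^{(j)}(x)$, whose $(1,2)$- and $(2,2)$-entries satisfy $B_j^{[k]}(x)=x^{-1/2}B_j^{[k-1]}(x)+\phi_k^{(j)\dagger}D_j^{[k-1]}(x)$ and $D_j^{[k]}(x)=\phi_k^{(j)}B_j^{[k-1]}(x)+x^{1/2}D_j^{[k-1]}(x)$, and then induct on $k$ via the representation rules \eqref{right-ym}--\eqref{right-pi}: at each step $\phi_k^{(j)\dagger}$ raises the $k$-th occupation number by one, the diagonal factors $x^{\pm1/2}$ leave it unchanged, and the relation $\phi_k^{(j)}\pi_k^{(j)}=0$ (equivalently, $\phi_k^{(j)}$ killing an empty site) is exactly what trims the sum down to the configurations with $0\le\Sigma_i^{m^j}-\Sigma_i^{n^j}\le1$, while the net creation of one particle by $B_j$ forces $\Sigma_0^{m^j}-\Sigma_0^{n^j}=1$; collecting the half-integer powers of $x$ accumulated along the way and multiplying by the normalization $x^{M_j/2}$ then assembles the weight $\prod_{i=1}^{M_j}x^{i(m^j_i-n^j_i)}$. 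Since this is the familiar computation for the phase model, one may instead simply quote it (cf.\ \cite{Boxed-q-boson-model} and Lemma~1 of Section~3.2.4 of \cite{boshilunwen}, already invoked above). Substituting the one-type formula into the tensor factorization and using \eqref{admiss} to re-express the product of the two one-type sums as a single sum over $|m^1\rangle^{(1)}\otimes|m^2\rangle^{(2)}\triangleright|n^1\rangle^{(1)}\otimes|n^2\rangle^{(2)}$ then yields \eqref{1-B-n,C-n,2}.

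For \eqref{2-B-n,C-n,2} the same scheme runs on the dual space: $\bar{\mathbb{C}}_j(x)=x^{M_j/2}C_j(1/x)$ is again a polynomial in $\{\phi_i^{(j)},\phi_i^{(j)\dagger}\}$ (effecting a net annihilation of one particle), and on bra vectors $\phi_i^{(j)}$ raises occupation numbers by \eqref{left-sc} while $\phi_i^{(j)\dagger}$ lowers them and kills the empty site by \eqref{left-pi}; the analogous $(2,1)$-entry recursion, evaluated at $1/x$, turns this action into a sum over the bra configurations admissible in the sense $\prescript{(j)}{}{\langle n^j|}\triangleleft\prescript{(j)}{}{\langle m^j|}$, each weighted by $\prod_{i=1}^{M_j}x^{i(m^j_i-n^j_i)}$, and the dual half of \eqref{admiss} then glues the two type-factors. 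I expect the single substantive point to be exactly this one-type computation --- simultaneously controlling the combinatorial truncation (so that precisely the admissible configurations survive, the projectors $\pi_k^{(j)}$ enforcing the upper bound $\le1$) and the bookkeeping of the half-integer powers of $x$, and, in the dual case, checking that the substitution $x\mapsto1/x$ together with the factor $x^{M_j/2}$ reproduces the same weight $\prod_i x^{i(m^j_i-n^j_i)}$ with $m^j_i\ge n^j_i$. Once the one-type statement is in hand, everything else is a formal consequence of \eqref{admiss} and the representation rules of Section~3.1.
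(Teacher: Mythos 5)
Your proposal is correct and shares the paper's overall strategy (both ultimately rest on the one-type phase-model computation, and both exploit the fact that the type-$1$ and type-$2$ operators act on separate tensor factors), but the execution of the core computation differs. The paper evaluates the matrix element $\prescript{(2)}{}{\langle m^{2}|}\otimes\prescript{(1)}{}{\langle m^{1}|}\,B_1(x)B_2(y)\,|n^{1}\rangle^{(1)}\otimes|n^{2}\rangle^{(2)}$ in closed form: it replaces each local $L_i^{(j)}$ by its $2\times2$ matrix of site-wise matrix elements $\bar L_i^{(j)}$ (which is a diagonal matrix, $A_{12}$, $A_{21}$, or $0$ according as $m_i^j-n_i^j=0,+1,-1$ or else), observes that the sandwiched product $\uparrow^{*}(\cdots)\downarrow$ vanishes unless the $A_{12}$'s and $A_{21}$'s alternate --- which is precisely admissibility --- and then reads off the power of $x$ from the positions $p_i^j,q_i^j$ of the raised and lowered sites. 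You instead propose an induction on the lattice sites via the recursion $B^{[k]}=x^{-1/2}B^{[k-1]}+\phi_k^{(j)\dagger}D^{[k-1]}$, $D^{[k]}=\phi_k^{(j)}B^{[k-1]}+x^{1/2}D^{[k-1]}$, or simply to quote the known one-type result (the paper itself cites Lemma~1 of Section~3.2.4 of \cite{boshilunwen} for the companion Lemma on admissibility, so the citation route is consistent with its conventions). Your route is more modular and matches the standard phase-model literature; the paper's route produces the weight $\prod_i x^{i(m_i^j-n_i^j)}$ and the admissibility constraint in a single explicit display. One caveat: your inductive step is only sketched --- the assertion that $\phi_k^{(j)}\pi_k^{(j)}=0$ ``trims the sum'' to configurations with $0\le\Sigma_i^{m^j}-\Sigma_i^{n^j}\le1$ needs the joint induction hypothesis on both $B^{[k]}$ and $D^{[k]}$ to be stated and verified, since the two entries feed into each other; as written this is plausible but not demonstrated. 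Since you also offer the citation as an alternative, the argument as a whole goes through.
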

\begin{proof}
We only prove the Eq.$(\ref{1-B-n,C-n,2})$. The proof for the Eq.$(\ref{2-B-n,C-n,2})$ is similar, so it will not be repeated.
Set $\begin{pmatrix}
	1 \\
	0
\end{pmatrix}= \uparrow$,
$\begin{pmatrix}
	0 \\
	1
\end{pmatrix}= \downarrow$ and  $( 1\ 0 )=\uparrow^*$.
From the Eq.$(\ref{T-j})$, one gets
\begin{align}\label{B1-B2}
B_1 (x) B_2 (y)=&( 1\ 0 )
\begin{pmatrix}
	A_1(x) & B_1(x) \\
	C_1(x) & D_1(x)
\end{pmatrix}
\begin{pmatrix}
	0 \\
	1
\end{pmatrix}
( 1\ 0 )
\begin{pmatrix}
	A_2(y) & B_2(y) \\
	C_2(y) & D_2(y)
\end{pmatrix}
\begin{pmatrix}
	0 \\
	1
\end{pmatrix}\notag\\
=& \uparrow^*  L_{M_1}^{(1)}(x)\cdots L_0^{(1)}(x)\downarrow \uparrow^* L_{M_2}^{(2)}(y)\cdots L_0^{(2)}(y) \downarrow.
\end{align}	
Thus
\begin{align}
	&\prescript{(2)}{}{\langle m^{2}|}\bigotimes \prescript{(1)}{}{\langle m^{1}|}
B_1 (x) B_2 (y)
|n^{1}\rangle^{(1)}\bigotimes|n^{2}\rangle^{(2)}\notag\\
=&\uparrow^*
 \bar{L}_{M_1}^{(1)}(x)\cdots \bar{L}_0^{(1)}(x)\downarrow \uparrow^* \bar{L}_{M_2}^{(2)}(y)\cdots \bar{L}_0^{(2)}(y)
\downarrow,
\end{align}
where
\begin{align}
\bar{L}_{i}^{(j)}(x) =
\begin{pmatrix}
\prescript{(j)}{i}{\langle m^{j}_i|}  x^{- \frac{1}{2}} | n_i^{j} \rangle_i^{(j)} & \prescript{(j)}{i}{\langle m^{j}_i|} \phi_i^{(j)\dagger }| n_i^{j} \rangle_i^{(j)} \\
\prescript{(j)}{i}{\langle m^{j}_i|} \phi_i^{(j)} | n_i^{j} \rangle_i^{(j)} &\prescript{(j)}{i}{\langle m^{j}_i|} x^{\frac{1}{2}}
	| n_i^{j} \rangle_i^{(j)}
\end{pmatrix} \quad j=1,2  \text{~and~} 0\leq i \leq M_{j}.
\end{align}
Combining Eqs.$(\ref{right-ym})$, $(\ref{right-sc})$ and $(\ref{tgpm-scalar-product})$ yields
\begin{align}
\bar{L}_{i}^{(j)}(x) =
\begin{cases}
	\begin{pmatrix}
		x^{-\frac{1}{2}} & 0 \\
		0 & x^{\frac{1}{2}}
	\end{pmatrix}, & m_i^{j} = n_i^{j}, \\
	A_{12}=\begin{pmatrix}
		0 & 1 \\
		0 & 0
	\end{pmatrix}, & m^{j}_i = n_i^{j} + 1, \\
	A_{21}=\begin{pmatrix}
		0 & 0 \\
		1 & 0
	\end{pmatrix}, & m_i^{j} = n_i^{j} - 1, \\
	\begin{pmatrix}
		0 & 0 \\
		0 & 0
	\end{pmatrix}, & \text{otherwise}.
\end{cases}
\end{align}
$|m ^j\rangle^{(j)} \not\triangleright |n^j \rangle^{(j)} $ implies that at least two matrices $A_{12}$ or $A_{21}$ will appear consecutively, which leads to
\begin{align}
\uparrow^*
\bar{L}_{M_1}^{(1)}(x)\cdots \bar{L}_0^{(1)}(x)\downarrow \uparrow^* \bar{L}_{M_2}^{(2)}(y)\cdots \bar{L}_0^{(2)}(y)
\downarrow=0.
\end{align}
In the case of $|m ^j\rangle^{(j)} \triangleright |n^j \rangle^{(j)} $,
let all integers $i$ satisfying $m^{j}_{i} = n_i^{j} + 1$ form the set $\{p_1^{j},\cdots,p_{r_j}^{j} \}$ and let $\{q_1^{j},\cdots,q_{s_j}^{j} \}$ formed by integers $i$ satisfying $m_i^{j} = n_i^{j} - 1$.
From  the admissibility relation, one gets $r_j -1 = s_j$ and
\begin{align}
p_i^{j} < q_i^{j} < p_{i+1}^{j}, \quad \text{for all } 1 \leq i \leq r_j - 1.
\end{align}
Note that $q_0^{j}=-1$ and $q_{r_j}^{j}=M_j +1$. The Eq.$(\ref{B1-B2})$ was rewritten as
\begin{align}
&\uparrow^*
\bar{L}_{M_1}^{(1)}(x)\cdots \bar{L}_0^{(1)}(x)A_{21}\uparrow
\uparrow^* \bar{L}_{M_2}^{(2)}(y)\cdots \bar{L}_0^{(2)}(y) A_{21}
\uparrow    \notag\\
=&\uparrow^* \rprod_{i=1}^{r_1}
\left[
\begin{pmatrix}
	x^{-\frac{1}{2}} & 0 \\
	0 & x^{\frac{1}{2}}
\end{pmatrix}^{q_i^1 - p_{i-1}^1 - 1}
\begin{pmatrix}
	0 & 1 \\
	0 & 0
\end{pmatrix}
\begin{pmatrix}
	x^{-\frac{1}{2}} & 0 \\
	0 & x^{\frac{1}{2}}
\end{pmatrix}^{p_i^1 - q_{i-1}^1 - 1}
\begin{pmatrix}
	0 & 0 \\
	1 & 0
\end{pmatrix}
\right] \uparrow  \notag\\
&\uparrow^* \rprod_{l=1}^{r_2}
\left[
\begin{pmatrix}
	y^{-\frac{1}{2}} & 0 \\
	0 & y^{\frac{1}{2}}
\end{pmatrix}^{q_l^2 - p_{l-1}^2 - 1}
\begin{pmatrix}
	0 & 1 \\
	0 & 0
\end{pmatrix}
\begin{pmatrix}
	y^{-\frac{1}{2}} & 0 \\
	0 & y^{\frac{1}{2}}
\end{pmatrix}^{p_l^2 - q_{l-1}^2 - 1}
\begin{pmatrix}
	0 & 0 \\
	1 & 0
\end{pmatrix}
\right] \uparrow  \notag\\
=&x^{-\frac{M_1}{2}} \prod_{i=1}^{r_1 -1} x^{p_i^1 - q_i^1} x^{p^1_{r_1}}
\cdot
y^{-\frac{M_2}{2}} \prod_{l=1}^{r_2 -1} y^{p_l^2 - q_l^2} y^{p^2_{r_2}}
\notag\\
=& x^{-\frac{M_1}{2}} \prod_{i=1}^{M_1} x^{i(m_i^1 - n_i^1)}
\cdot
y^{-\frac{M_2}{2}} \prod_{l=1}^{M_2} y^{l(m_l^2 - n_l^2)}.
\end{align}
Therefore,
\begin{align}
	&x^{\frac{M_1}{2}}y^{\frac{M_2}{2}}\prescript{(2)}{}{\langle m^{2}|}\bigotimes \prescript{(1)}{}{\langle m^{1}|}
	B_1 (x) B_2 (y)
	|n^{1}\rangle^{(1)}\bigotimes|n^{2}\rangle^{(2)}\notag\\
	=&\prescript{(2)}{}{\langle m^{2}|}\bigotimes \prescript{(1)}{}{\langle m^{1}|}
\bar{\mathbb{B}}_{1}(x)\bar{\mathbb{B}}_{2}(y)
	|n^{1}\rangle^{(1)}\bigotimes|n^{2}\rangle^{(2)}\notag\\
=&
\begin{cases}
\prod\limits_{i=1}^{M_1} x^{i(m_i^1 - n_i^1)}\prod\limits_{l=1}^{M_2} y^{l(m_l^2 - n_l^2)},
& | m^{1}\rangle^{(1)} \triangleright   | n^{1}\rangle^{(1)} \text{and }
| m^{2}\rangle^{(2)} \triangleright   |   n^{2}\rangle^{(2)}, \\
	0, & \text{otherwise}.
\end{cases}
\end{align}
Based on the Eq.$(\ref{tgpm-scalar-product})$, the Eq.$(\ref{1-B-n,C-n,2})$ can be derived.
\end{proof}
The above derivation shows that the action of $\bar{\mathbb{B}}_{1}(x)$ and $\bar{\mathbb{B}}_{2}(x)$ on $|n^{1}\rangle^{(1)}\bigotimes|n^{2}\rangle^{(2)} $ is independent of each other, and the same holds true for $\bar{\mathbb{C}}_{1}(x)$ and $\bar{\mathbb{C}}_{2}(x)$.
It implies that
\begin{align}
\bar{\mathbb{B}}_{1}(x)\bar{\mathbb{B}}_{2}(y)|n^{1}\rangle^{(1)}\bigotimes|n^{2}\rangle^{(2)}
=&\bar{\mathbb{B}}_{1}(x)|n^{1}\rangle^{(1)}  \bigotimes\bar{\mathbb{B}}_{2}(y)|n^{2}\rangle^{(2)}, \label{B1B2-B1-B2}\\
\prescript{(2)}{}{\langle n^{2}|}\bigotimes \prescript{(1)}{}{\langle n^{1}|} \bar{\mathbb{C}}_{1}(x)\bar{\mathbb{C}}_{2}(x)
=&\prescript{(2)}{}{\langle n^{2}|}\bar{\mathbb{C}}_{2}(x)  \bigotimes
\prescript{(1)}{}{\langle n^{1}|}\bar{\mathbb{C}}_{1}(x),
\end{align}
and
\begin{align}
\bar{\mathbb{B}}_{1}(x)|n^{1}\rangle^{(1)}&=\sum_{| m^{1} \rangle^{(1)} \triangleright | n^{1} \rangle^{(1)}}\prod_{i=1}^{M_1} x^{i(m^{1}_i - n^{1}_i)} | m^{1} \rangle ^{(1)}, \label{B1-1}
\\
\bar{\mathbb{B}}_{2}(y)|n^{2}\rangle^{(2)}&=\sum_{| m^{2} \rangle^{(2)} \triangleright | n^{2} \rangle^{(2)}}\prod_{j=1}^{M_2} y^{j(m^{2}_j - n^{2}_j)} | m^{2} \rangle^{(2)},\label{B2-1}\\
\prescript{(1)}{}{\langle n^{1}|}\bar{\mathbb{C}}_{1}(x)
&=\sum_{\prescript{(1)}{}{\langle n^{1}|}  \triangleleft \prescript{(1)}{}{\langle m^{1}|}}\prod_{i=1}^{M_1} x^{i(m^{1}_i - n^{1}_i)}\prescript{(1)}{}{\langle m^{1}|},\label{C1-1}\\
\prescript{(2)}{}{\langle n^{2}|}\bar{\mathbb{C}}_{2}(x)
&=\sum_{\prescript{(2)}{}{\langle n^{2}|} \triangleleft \prescript{(2)}{}{\langle m^{2}|}} \prod_{j=1}^{M_2} y^{j(m^{2}_j - n^{2}_j)}  \prescript{(2)}{}{\langle m^{2}|} \label{C2-1}.
\end{align}
\begin{proposition}\label{M-B-C-2}
	Set $l_j=l(\nu_j)(j=1,2)$, from the relations
	\begin{align}
		\mathcal{M}  | n^{1}\rangle^{(1)}\bigotimes|n^{2}\rangle^{(2)}&=| \nu^{1},\nu^{2} \rangle ,
		\quad 	\mathcal{M}  | m^{1}\rangle^{(1)}\bigotimes|m^{2}\rangle^{(2)}=| \mu^{1},\mu^{2} \rangle,
		\\
		\prescript{(2)}{}{\langle n^{2}|}\bigotimes \prescript{(1)}{}{\langle n^{1}|}  \mathcal{M}^* &= \langle \nu^{2},\nu^{1}|, \quad
		\prescript{(2)}{}{\langle m^{2}|}\bigotimes \prescript{(1)}{}{\langle m^{1}|}  \mathcal{M}^* = \langle \mu^{2},\mu^{1}| ,
	\end{align}
	we have
	\begin{align}
		\mathcal{M} \bar{\mathbb{B}}_{1}(x)\bar{\mathbb{B}}_{2}(y)|n^{1}\rangle^{(1)}\bigotimes|n^{2}\rangle^{(2)}
		&= \sum_{{ \nu^{1} \prec \mu^{1} \subseteq [l_{1}+1,M_{1}] }\atop{\nu^{2} \prec \mu^{2} \subseteq [l_{2}+1,M_{2}]}}
		x^{|\mu^{1}| - |\nu^{1}|} y^{|\mu^{2}| - |\nu^{2}|}
		| \mu^{1},\mu^{2} \rangle	,\label{M-B1-B2-2}	
		\\
		\prescript{(2)}{}{\langle n^{2}|}\bigotimes \prescript{(1)}{}{\langle n^{1}|}
		\bar{\mathbb{C}}_{2}(y)\bar{\mathbb{C}}_{1}(x)
		\mathcal{M}^*
		&=
		\sum_{{ \nu^{1} \prec \mu^{1} \subseteq [l_{1}+1,M_{1}] }\atop{\nu^{2} \prec \mu^{2} \subseteq [l_{2}+1,M_{2}]}}
		x^{|\mu^{1}| - |\nu^{1}|} y^{|\mu^{2}| - |\nu^{2}|}
		\langle \mu^{2},\mu^{1} |,\label{M-C1-C2-2}	
	\end{align}	
	which are summed over  boxed partitions. Here $[l_{j}+1, M_j](j=1,2)$ denote a rectangular Young diagrams with $l_{j}+1$ rows and $M_j$ columns.
\end{proposition}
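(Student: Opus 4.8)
The plan is to push the operator expansions of Lemma~\ref{B-n,C-n,2} through the maps $\mathcal{M},\mathcal{M}^{*}$ and then re-index the resulting sums by partitions. Applying $\mathcal{M}$ to Eq.~$(\ref{1-B-n,C-n,2})$ and using Definition~\ref{M}, each term $|m^{1}\rangle^{(1)}\bigotimes|m^{2}\rangle^{(2)}$ on the right is sent to $|\mu^{1},\mu^{2}\rangle$ with $\mu^{j}=(1^{m^{j}_{1}}\cdots M_{j}^{m^{j}_{M_{j}}})$, so that
\[
\mathcal{M}\,\bar{\mathbb{B}}_{1}(x)\bar{\mathbb{B}}_{2}(y)\,|n^{1}\rangle^{(1)}\bigotimes|n^{2}\rangle^{(2)}=\sum_{{|m^{1}\rangle^{(1)}\triangleright|n^{1}\rangle^{(1)}}\atop{|m^{2}\rangle^{(2)}\triangleright|n^{2}\rangle^{(2)}}}\ \prod_{i=1}^{M_{1}}x^{\,i(m^{1}_{i}-n^{1}_{i})}\prod_{j=1}^{M_{2}}y^{\,j(m^{2}_{j}-n^{2}_{j})}\ |\mu^{1},\mu^{2}\rangle .
\]
It then remains to (a) rewrite the monomials in terms of $|\mu^{j}|,|\nu^{j}|$, and (b) recognise the index set as a set of boxed partitions.

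For (a) I would invoke the elementary identity $|\lambda|=\sum_{i\geq1}i\,m_{i}(\lambda)$: since $m^{j}_{i}=m_{i}(\mu^{j})$ and $n^{j}_{i}=m_{i}(\nu^{j})$ for $1\leq i\leq M_{j}$, one gets $\sum_{i=1}^{M_{j}}i(m^{j}_{i}-n^{j}_{i})=|\mu^{j}|-|\nu^{j}|$, so the coefficient of $|\mu^{1},\mu^{2}\rangle$ is exactly $x^{|\mu^{1}|-|\nu^{1}|}y^{|\mu^{2}|-|\nu^{2}|}$.

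For (b), the key point is that $\mathcal{M}$ restricts to a bijection from $\{\,|m^{j}\rangle^{(j)}:|m^{j}\rangle^{(j)}\triangleright|n^{j}\rangle^{(j)}\,\}$ onto $\{\,\mu^{j}:\nu^{j}\prec\mu^{j}\subseteq[l_{j}+1,M_{j}]\,\}$, with $l_{j}=l(\nu^{j})$. The forward inclusion is Lemma~\ref{xr-jc-1}, Eq.~$(\ref{r-admiss-interlace})$: admissibility gives $\nu^{j}\prec\mu^{j}$; moreover the occupation numbers are supported on $0\leq i\leq M_{j}$, so every part of $\mu^{j}$ is $\leq M_{j}$, and interlacing against the $l_{j}$ parts of $\nu^{j}$ forces $l(\mu^{j})\leq l_{j}+1$, i.e. $\mu^{j}\subseteq[l_{j}+1,M_{j}]$. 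For the converse (and injectivity) I would read the partial sums $\Sigma_{i}^{m^{j}}=\sum_{k=i}^{M_{j}}m^{j}_{k}$ for $i\geq1$ as conjugate parts, $\Sigma_{i}^{m^{j}}=(\mu^{j})'_{i}$ and $\Sigma_{i}^{n^{j}}=(\nu^{j})'_{i}$; the relation $\nu^{j}\prec\mu^{j}$ (that $\mu^{j}/\nu^{j}$ is a horizontal strip) is equivalent to $0\leq(\mu^{j})'_{i}-(\nu^{j})'_{i}\leq1$ for all $i\geq1$, which is exactly the $i\geq1$ clause of the admissibility condition~$(\ref{admissible-condition})$. The remaining requirement $\Sigma_{0}^{m^{j}}-\Sigma_{0}^{n^{j}}=1$ then pins down the zeroth occupation number uniquely as $m^{j}_{0}=N_{j}+1-l(\mu^{j})$ (which $\mathcal{M}$ does not record), and this is a nonnegative integer because $N_{j}\geq l(\nu^{j})=l_{j}\geq l(\mu^{j})-1$. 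Hence each boxed $\mu^{j}\succ\nu^{j}$ is attained by exactly one admissible $|m^{j}\rangle^{(j)}$, the sum may be re-indexed by such pairs, and Eq.~$(\ref{M-B1-B2-2})$ follows.

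Equation~$(\ref{M-C1-C2-2})$ is obtained by the same three steps applied to Eq.~$(\ref{2-B-n,C-n,2})$ instead, using the dual clause of Definition~\ref{M} and the dual interlacing statement Eq.~$(\ref{l-admiss-interlace})$ of Lemma~\ref{xr-jc-1}; only the bookkeeping of bra-vectors changes. The step I expect to be the main obstacle is the bijection in the previous paragraph: translating the inequality system~$(\ref{admissible-condition})$ on the partial sums of occupation numbers into ``interlacing inside a box'', and in particular verifying that the zeroth occupation number, which $\mathcal{M}$ does not see, is reconstructed consistently and stays nonnegative for every admissible choice.
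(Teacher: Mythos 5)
Your proposal is correct and follows essentially the same route as the paper: combine Definition \ref{M}, Lemma \ref{xr-jc-1} and Lemma \ref{B-n,C-n,2}, convert the monomial exponents via $|\mu^{j}|-|\nu^{j}|=\sum_{i}i\,(m^{j}_{i}-n^{j}_{i})$, and re-index the sum over admissible basis vectors by boxed interlacing partitions. The only difference is one of care: the paper disposes of the re-indexing in a single line (and even asserts $l(\mu^{j})=l_{j}+1$, which can fail when the extra particle lands at site $0$, though only the inequality $l(\mu^{j})\leq l_{j}+1$ is needed), whereas you verify the bijection between admissible occupation sequences and partitions $\nu^{j}\prec\mu^{j}\subseteq[l_{j}+1,M_{j}]$ explicitly via the conjugate-part reading of the partial sums, including the reconstruction and nonnegativity of the unrecorded zeroth occupation number; this fills in a step the paper leaves implicit.
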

\begin{proof}
Since $l_j=l(\nu_j)$ and $\nu^{j} \prec \mu^{j}(j=1,2)$, then $l_j +1=l(\mu_j)$ and $\mu^{j} \subseteq [l_{j}+1,M_{j}]$.
Based on the Definition $\ref{M}$,  we have $|\mu^{j}|=\sum\limits_{i=1}^{M_j}m^{j}_i \cdot i$ and $|\nu^{j}|=\sum\limits_{i=1}^{M_j}n^{j}_i \cdot i$.
Combining Definition $\ref{M}$, Lemma \ref{xr-jc-1} and Lemma \ref{B-n,C-n,2} yields Eqs.(\ref{M-B1-B2-2}) and (\ref{M-C1-C2-2}).
\end{proof}
Set $\{\mathbf{x}\}=\{x_1,\cdots,x_{n}\}$. Introduce the skew Schur function
\begin{align}\label{skew-schur}
s_{\mu/\nu}(x) = \begin{cases}
	x^{|\mu| - |\nu|}, & \mu \succ \nu, \\
	0, & \text{otherwise},
\end{cases}
\end{align}
where the Schur function is obtained by setting $\nu=\emptyset$ and has the following form
\begin{align}\label{schur}
s_\mu\{\mathbf{x}\}= \frac{\det \left( x_i^{\mu_j - j + n} \right)_{1 \leq i,j \leq n}}{\prod\limits_{1 \leq i < j \leq n} (x_i - x_j)},
\end{align}
which satisfies
\begin{align}\label{schur-skew-schur}
	s_\mu\{x_1, \ldots, x_n\} = \sum_{\nu \subseteq [n-1,\infty]} s_{\mu/\nu}(x_n) \, s_\nu\{x_1, \ldots, x_{n-1}\}.	
\end{align}	
Combining the Proposition \ref{M-B-C-2} and the Eq.$(\ref{skew-schur})$, one gets following properties.
\begin{proposition}
The following equations hold
\begin{align}
	\mathcal{M} \bar{\mathbb{B}}_{1}(x)\bar{\mathbb{B}}_{2}(y)|n^{1}\rangle^{(1)}\bigotimes|n^{2}\rangle^{(2)}
	&= \sum_{{ \nu^{1} \prec \mu^{1} \subseteq [l_{1}+1,M_{1}] }\atop{\nu^{2} \prec \mu^{2} \subseteq [l_{2}+1,M_{2}]}}
	s_{\mu^{1}/\nu^{1}}(x) s_{\mu^{2}/\nu^{2}}(y)
	| \mu^{1},\mu^{2} \rangle,	\label{M-B1-B2-N}
	\\
	\prescript{(2)}{}{\langle n^{2}|}\bigotimes \prescript{(1)}{}{\langle n^{1}|}
	\bar{\mathbb{C}}_{2}(y)\bar{\mathbb{C}}_{1}(x)
	\mathcal{M}^*
	&=	\sum_{{ \nu^{1} \prec \mu^{1} \subseteq [l_{1}+1,M_{1}] }\atop{\nu^{2} \prec \mu^{2} \subseteq [l_{2}+1,M_{2}]}}
	s_{\mu^{1}/\nu^{1}}(x) s_{\mu^{2}/\nu^{2}}(y)
	\langle \mu^{2},\mu^{1} |.\label{M-C1-C2-N}
\end{align}	

\end{proposition}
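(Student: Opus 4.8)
The plan is to deduce this proposition directly from the two results we have just established: Proposition~\ref{M-B-C-2}, which expresses the actions $\mathcal{M}\bar{\mathbb{B}}_{1}(x)\bar{\mathbb{B}}_{2}(y)$ and $\bar{\mathbb{C}}_{2}(y)\bar{\mathbb{C}}_{1}(x)\mathcal{M}^*$ as sums over boxed pairs of partitions with monomial weights $x^{|\mu^{1}|-|\nu^{1}|}y^{|\mu^{2}|-|\nu^{2}|}$, together with the definition~\eqref{skew-schur} of the skew Schur function. First I would fix the data: set $l_j=l(\nu_j)$ and recall from Proposition~\ref{M-B-C-2} that the summation runs precisely over $\nu^{j}\prec\mu^{j}\subseteq[l_j+1,M_j]$ for $j=1,2$. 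On this index set the interlacing condition $\mu^{j}\succ\nu^{j}$ holds by hypothesis, so by the case distinction in~\eqref{skew-schur} we have exactly $s_{\mu^{j}/\nu^{j}}(x)=x^{|\mu^{j}|-|\nu^{j}|}$ for the single-variable skew Schur function; the ``otherwise'' branch never occurs inside the sum.

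The remaining step is purely a substitution: replacing $x^{|\mu^{1}|-|\nu^{1}|}$ by $s_{\mu^{1}/\nu^{1}}(x)$ and $y^{|\mu^{2}|-|\nu^{2}|}$ by $s_{\mu^{2}/\nu^{2}}(y)$ term by term in~\eqref{M-B1-B2-2} and~\eqref{M-C1-C2-2} yields~\eqref{M-B1-B2-N} and~\eqref{M-C1-C2-N} respectively. For clarity I would note that the exponents are legitimate because $|\mu^{j}|=\sum_{i=1}^{M_j} i\,m^{j}_i$ and $|\nu^{j}|=\sum_{i=1}^{M_j} i\,n^{j}_i$ (as recorded in the proof of Proposition~\ref{M-B-C-2}), so the monomial $x^{i(m^{1}_i-n^{1}_i)}$ appearing in Lemma~\ref{B-n,C-n,2} multiplies out exactly to $x^{|\mu^{1}|-|\nu^{1}|}$, and likewise for $y$; hence the identification with $s_{\mu^{1}/\nu^{1}}(x)s_{\mu^{2}/\nu^{2}}(y)$ is unambiguous.

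There is essentially no obstacle here: this proposition is a cosmetic reformulation of Proposition~\ref{M-B-C-2} in the language of (single-variable) skew Schur functions, and the only thing to be careful about is that the admissibility/interlacing constraints guaranteeing $\mu^{j}\succ\nu^{j}$ are built into the range of summation, so one really is always in the first branch of~\eqref{skew-schur}. I would therefore write the proof in one short paragraph: invoke Proposition~\ref{M-B-C-2} for the monomial form, observe that on its index set $s_{\mu^{j}/\nu^{j}}(\cdot)$ coincides with the corresponding monomial by~\eqref{skew-schur}, and conclude. The more substantive content — the appearance of full (multivariable) Schur functions via the branching rule~\eqref{schur-skew-schur} — is deferred to the scalar-product computation of Section~$4$, where iterated applications of $\bar{\mathbb{B}}_j$ and $\bar{\mathbb{C}}_j$ telescope these single-variable skew Schur factors into $s_\mu\{x_1,\dots,x_n\}$.
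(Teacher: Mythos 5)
Your proof is correct and follows exactly the route the paper takes: the paper derives this proposition with the single remark ``Combining Proposition~\ref{M-B-C-2} and Eq.~(\ref{skew-schur}), one gets the following properties,'' i.e.\ the monomial weights $x^{|\mu^{1}|-|\nu^{1}|}y^{|\mu^{2}|-|\nu^{2}|}$ are rewritten as $s_{\mu^{1}/\nu^{1}}(x)\,s_{\mu^{2}/\nu^{2}}(y)$ using the fact that the interlacing condition built into the summation range keeps one in the first branch of the skew Schur definition. Your additional observations (the exponent identification via $|\mu^{j}|=\sum_i i\,m^{j}_i$ and the deferral of the multivariable branching rule to Section~4) are consistent with the paper and add no gap.
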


\section{The scalar product and boxed UC plane partitions}
In this section, we construct the boxed UC plane partition and the scalar product of the  two-site generalized phase model.
With the help of actions of a series of  monodromy matrix operators on the basis vectors, we derive the generating function of boxed UC plane partitions.
Meanwhile, in term of the skew Schur functions, the generating function can be expressed as products  of Schur functions.

\begin{definition}
	For $i \in \mathbb{Z}$ and a pair of  boxed partitions $\alpha^i$ and $\beta^i$, denote the boxed $2-$partition $(\chi_{i})=(\alpha^i,\beta^i)$.
	The boxed UC plane partition refers to the UC plane partition placed into a box of a finite size, which  is defined as $\Pi^{'}=(\ldots,\chi_{-1},\chi_{0},\chi_{1},\ldots)$
	and represents a pair of boxed plane partitions $\pi^{1}=(\ldots,\alpha^{-1},\alpha^{0},\alpha^{1},\ldots)$ and
$\pi^{2}=(\ldots,\beta^{-1},\beta^{0},\beta^{1},\ldots)$.	
\end{definition}
\begin{proposition}\label{M-n-0}
The following equations hold
\begin{align}
	\mathcal{M}
\lprod_{i=1}^{N_1} \bar{\mathbb{B}}_{1}(x_i)
\lprod_{j=1}^{N_2} \bar{\mathbb{B}}_{2}(y_j)
	|0\rangle^{(1)}\bigotimes|0\rangle^{(2)}
	&= \sum_{{ [N_{1},M_{1}] \ni \pi^{1}_0 \succ \cdots \succ \pi^{1}_{N_{1}} = \emptyset}
		\atop{[N_{2},M_{2}] \ni \pi^{2}_0 \succ \cdots \succ \pi^{2}_{N_{2}} = \emptyset}} 	
	\prod_{i=1}^{N_1} x_i^{|\pi^{1}_{i-1}| - |\pi^{1}_i|}
	\prod_{j=1}^{N_2} y_j^{|\pi^{2}_{j-1}| - |\pi^{2}_j|}
	| \pi^{1}_0,\pi^{2}_0 \rangle,\label{M-B1-BN-x}\\
	\prescript{(2)}{}{\langle 0|}\bigotimes \prescript{(1)}{}{\langle 0|}
\rprod_{j=1}^{N_2} \bar{\mathbb{C}}_{2}(y_j)
\rprod_{i=1}^{N_1} \bar{\mathbb{C}}_{1}(x_i) \mathcal{M}^*
	&=
	\sum_{{ \emptyset=\pi^{1}_{-N_{1}} \prec \cdots \prec \pi^{1}_0  \in [N_{1}, M_{1}]}
		\atop{\emptyset=\pi^{2}_{-N_{2}} \prec \cdots \prec \pi^{2}_0  \in [N_{2}, M_{2}]}}
	\prod_{i=1}^{N_1} x_i^{|\pi^{1}_{i-1}| - |\pi^{1}_i|}
\prod_{j=1}^{N_2} y_j^{|\pi^{2}_{j-1}| - |\pi^{2}_j|}
	\langle \pi^{2}_0,\pi^{1}_0|,\label{M-C1-CN-x}
\end{align}	
which are summed over $[N_{j},M_{j}] \ni \pi^{j}_0 \succ \cdots \succ \pi^{j}_{N_{j}} = \emptyset$ and $\emptyset=\pi^{j}_{-N_{j}} \prec \cdots \prec \pi^{j}_0  \in [N_{j}, M_{j}]$ respectively.
\end{proposition}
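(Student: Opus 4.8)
The plan is to prove Proposition \ref{M-n-0} by induction on the number of creation operators $\bar{\mathbb{B}}_1(x_i)$ and $\bar{\mathbb{B}}_2(y_j)$, using Proposition \ref{M-B-C-2} as the single-operator step. The base case $N_1 = N_2 = 0$ is trivial: the empty products act as the identity, $\mathcal{M}(|0\rangle^{(1)}\bigotimes|0\rangle^{(2)}) = |\emptyset,\emptyset\rangle$, and the claimed sum collapses to a single term, since the only admissible chain of length zero is $\pi^j_0 = \emptyset$. For the inductive step, I would first use the commutativity $[B_i(x),B_j(y)] = 0$ from Eq.$(\ref{B,C,commute})$ (which passes to the rescaled operators $\bar{\mathbb{B}}_j$), together with the factorization in Eq.$(\ref{B1B2-B1-B2})$, to reduce to analyzing a single tower $\bar{\mathbb{B}}_1(x_{N_1})\cdots\bar{\mathbb{B}}_1(x_1)$ acting on $|0\rangle^{(1)}$ and, independently, $\bar{\mathbb{B}}_2(y_{N_2})\cdots\bar{\mathbb{B}}_2(y_1)$ acting on $|0\rangle^{(2)}$. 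So really only one factor needs to be handled.

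For that factor, I would apply Proposition \ref{M-B-C-2} (or its Schur-function reformulation, Eq.$(\ref{M-B1-B2-N})$) one operator at a time. Each application of $\bar{\mathbb{B}}_1(x_i)$ replaces the current partition $\nu$ by a sum over $\mu$ with $\nu \prec \mu$, weighted by $x_i^{|\mu|-|\nu|}$, with the containment constraint $\mu \subseteq [l(\nu)+1, M_1]$. Starting from $\nu = \emptyset$ with $l(\emptyset)=0$, after applying $i$ operators the length of the partition has increased by exactly $1$ each time, so the intermediate partition lies in $[i, M_1]$; after all $N_1$ operators one reaches a partition in $[N_1, M_1]$. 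Reading the chain of intermediate partitions \emph{backwards} as $\pi^1_0 \succ \pi^1_1 \succ \cdots \succ \pi^1_{N_1} = \emptyset$ (with $\pi^1_{N_1}$ the starting $\emptyset$ and $\pi^1_0$ the final partition), the telescoping product of weights $\prod_i x_i^{|\pi^1_{i-1}|-|\pi^1_i|}$ is exactly what Eq.$(\ref{M-B1-BN-x})$ predicts. One must check the box constraint $\pi^1_0 \in [N_1,M_1]$ is both necessary and sufficient; necessity follows from the length-increment observation, and sufficiency because any interlacing chain can be realized by the nested containment conditions $\mu\subseteq[l(\nu)+1,M_1]$ step by step. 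Combining the two towers and using $|\pi^j_0,\pi^j_0\rangle$ notation for the resulting $2$-partition state gives Eq.$(\ref{M-B1-BN-x})$; the dual statement Eq.$(\ref{M-C1-CN-x})$ follows by the same argument applied to Eqs.$(\ref{M-C1-C2-N})$ and $(\ref{B2-1})$--$(\ref{C2-1})$, reading the chain in the opposite order, and I would simply remark that it is analogous.

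The main obstacle is bookkeeping the ordering conventions: the product symbols $\lprod$ and $\rprod$ encode opposite orderings of the operators, and one must be careful that the index $i$ of $x_i$ (resp. $y_j$) attaches to the $i$-th step of the interlacing chain counted from the innermost $\emptyset$ outward, matching the subscripts on $\pi^1_i$. A related subtlety is verifying that the length constraints propagate correctly so that the final partition is constrained exactly to the box $[N_j, M_j]$ and not something larger or smaller — this requires the observation that $\bar{\mathbb{B}}_j$ strictly increases $l(\cdot)$ by $1$ at each step (visible from $\Sigma_0^{m^j} - \Sigma_0^{n^j} = 1$ in the admissibility condition Eq.$(\ref{admissible-condition})$, which forces a new part). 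Once these conventions are pinned down, the proof is a routine telescoping induction, and I would present it compactly, citing Proposition \ref{M-B-C-2} and Eq.$(\ref{schur-skew-schur})$ as the only real inputs.
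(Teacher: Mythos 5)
Your proposal is correct and follows essentially the same route as the paper: factor the two towers apart via Eq.~(\ref{B1B2-B1-B2}), iterate the single-operator action (the paper uses Eqs.~(\ref{B1-1})--(\ref{B2-1}) directly and applies $\mathcal{M}$ at the end, which is equivalent to your step-by-step use of Proposition~\ref{M-B-C-2}), and read off the telescoping weights along the interlacing chain, with the dual identity handled analogously. The only quibble is that the admissibility condition forces the \emph{particle number} $\Sigma_0$ to grow by exactly $1$ per step while the partition length $l(\cdot)$ grows by at most $1$ (the new particle may land at site $0$), but this still yields the containment $\pi^j_0\subseteq[N_j,M_j]$ that you need, so the argument goes through.
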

\begin{proof}
The Eq.$(\ref{B1B2-B1-B2})$ leads to
\begin{align}
\lprod_{i=1}^{N_1} \bar{\mathbb{B}}_{1}(x_i)
\lprod_{j=1}^{N_2} \bar{\mathbb{B}}_{2}(y_j)
|0\rangle^{(1)}\bigotimes|0\rangle^{(2)}
=\lprod_{i=1}^{N_1} \bar{\mathbb{B}}_{1}(x_i) |0\rangle^{(1)}
\bigotimes
\lprod_{j=1}^{N_2} \bar{\mathbb{B}}_{2}(y_j)|0\rangle^{(2)}.
\end{align}
By repeatedly iterating Eqs.$(\ref{B1-1})$ and $(\ref{B2-1})$  $N_1$ and $N_2$ times with $|n^{1}\rangle^{(1)}\bigotimes|n^{2}\rangle^{(2)}=|0\rangle^{(1)}\bigotimes|0\rangle^{(2)}$, respectively, we obtain
\begin{align}
	&\lprod_{i=1}^{N_1} \bar{\mathbb{B}}_{1}(x_i) |0\rangle^{(1)}
	\bigotimes
	\lprod_{j=1}^{N_2} \bar{\mathbb{B}}_{2}(y_j)|0\rangle^{(2)}\notag\\
=&\sum_{{|m^{a}\rangle^{(1)} \triangleright |m^{a+1}\rangle^{(1)}}
	\atop{a=1,\cdots,N_1}}
\sum_{{|n^{b}\rangle^{(2)} \triangleright |n^{b+1}\rangle^{(2)}}
	\atop{b=1,\cdots,N_2}}
	\prod_{l=1}^{N_1}\prod_{i=1}^{M_1} x^{i(m^{l}_i - m^{1+1}_i)}
	\prod_{k=1}^{N_2}\prod_{j=1}^{M_2} y^{j(n^{k}_j - n^{k+1}_j)}
	|m^{1}\rangle^{(1)}\bigotimes|n^{1}\rangle^{(2)},
\end{align}
where $|m^{N_1+1}\rangle^{(1)}=|0\rangle^{(1)}$ and  $|n^{N_2+1}\rangle^{(2)}=|0\rangle^{(2)}$.

From the assumption
\begin{align}
\mathcal{M} |m^{a}\rangle^{(1)}\bigotimes|n^{b}\rangle^{(2)}
=|\pi^{a-1}\rangle^{(1)}\bigotimes|\pi^{b-1}\rangle^{(2)},
\end{align}
one gets
\begin{align}
	&\mathcal{M}
	\lprod_{i=1}^{N_1} \bar{\mathbb{B}}_{1}(x_i)
	\lprod_{j=1}^{N_2} \bar{\mathbb{B}}_{2}(y_j)
	|0\rangle^{(1)}\bigotimes|0\rangle^{(2)}\notag\\
	= &\sum_{{ [N_{1},M_{1}] \ni \pi^{1}_0 \succ \cdots \succ \pi^{1}_{N_{1}} = \emptyset}
		\atop{[N_{2},M_{2}] \ni \pi^{2}_0 \succ \cdots \succ \pi^{2}_{N_{2}} = \emptyset}} 	
	\prod_{i=1}^{N_1} x_i^{|\pi^{1}_{i-1}| - |\pi^{1}_i|}
	\prod_{j=1}^{N_2} y_j^{|\pi^{2}_{j-1}| - |\pi^{2}_j|}
	| \pi^{1}_0,\pi^{2}_0 \rangle.
\end{align}	
Using the similar approach, the Eq.$(\ref{M-C1-CN-x})$ can be proved.
\end{proof}

Let $\{\mathbf{a^j}\}=\{a_1,\cdots,a_{N_j}\}$, where $\mathbf{a}$  can be replaced by $\mathbf{x},\mathbf{y},\mathbf{z},\mathbf{v}$ and $j=1,2$.
Consider the scalar product of the two-site generalized phase model
\begin{align}
	&S(\{\mathbf{x^1}\}, \{\mathbf{y^2}\}, \{\mathbf{z^1}\}, \{\mathbf{v^2}\})\notag\\
	=&\bigg \langle \prescript{(2)}{}{\langle 0|}\bigotimes \prescript{(1)}{}{\langle 0|}
	\rprod_{j=1}^{N_2} \bar{\mathbb{C}}_{2}(y_j)
	\rprod_{i=1}^{N_1} \bar{\mathbb{C}}_{1}(x_i) \mathcal{M}^*
	,\mathcal{M}\lprod_{l=1}^{N_1} \bar{\mathbb{B}}_{1}(z_i)
	\lprod_{k=1}^{N_2} \bar{\mathbb{B}}_{2}(v_k)
	|0\rangle^{(1)}\bigotimes|0\rangle^{(2)} \bigg\rangle.
\end{align}
From the Eq.$(\ref{scalar-product-equal})$, one obtains
\begin{align}
	&S(\{\mathbf{x^1}\}, \{\mathbf{y^2}\}, \{\mathbf{z^1}\}, \{\mathbf{v^2}\})\notag\\
	=&\prescript{(2)}{}{\langle 0|}\bigotimes \prescript{(1)}{}{\langle 0|}
	\rprod_{j=1}^{N_2} \bar{\mathbb{C}}_{2}(y_j)
	\rprod_{i=1}^{N_1} \bar{\mathbb{C}}_{1}(x_i)
	\lprod_{l=1}^{N_1} \bar{\mathbb{B}}_{1}(z_i)
	\lprod_{k=1}^{N_2} \bar{\mathbb{B}}_{2}(v_k)
	|0\rangle^{(1)}\bigotimes|0\rangle^{(2)}.
\end{align}
By using the Lemma $\ref{M-n-0}$ to yield
\begin{align}\label{scalar-product-A1-A2}
S(\{\mathbf{x^1}\}, \{\mathbf{y^2}\}, \{\mathbf{z^1}\}, \{\mathbf{v^2}\})
 =&
	  \sum_{{\pi^{1} \subseteq [N_{1}, N_{1}, M_{1}]}
	  	\atop{\pi^{2} \subseteq [N_{2}, N_{2}, M_{2}]}}
A_{\pi^{1} }(\{\mathbf{x^1}\}, \{\mathbf{z^1}\}) A_{\pi^{2} }(\{\mathbf{y^2}\}, \{\mathbf{v^2}\}),
\end{align}
where we note  $\Pi^{'}=(\ldots,\chi_{-1},\chi_{0},\chi_{1},\ldots)$,
$(\chi_{i})=(\pi_{i}^{1},\pi_{i}^{2})$,
	$\pi^{t}=(\ldots,\pi_{-1}^{t},\pi_{0}^{t},\pi_{1}^{t},\ldots)(t=1,2)$ and
\begin{align}
A_{\pi^{1} }(\{\mathbf{x^1}\}, \{\mathbf{z^1}\})
&=\prod_{i,l=1}^{N_1}
x_i^{|\pi^{1}_{i-1}| - |\pi^{1}_i|}
z_l^{|\pi^{2}_{l-1}| - |\pi^{2}_l|},\\
A_{\pi^{2} }(\{\mathbf{y^2}\}, \{\mathbf{v^2}\})
&=
\prod_{j,k=1}^{N_2}
y_j^{|\pi^{1}_{j-1}| - |\pi^{1}_j|}
v_k^{|\pi^{2}_{k-1}| - |\pi^{2}_k|}	.
\end{align}
It is found from the Eq.($\ref{scalar-product-A1-A2}$)   that the scalar product   is a generating function of  boxed UC plane partitions.

\begin{proposition}\label{M-n-0-schur}
	The following equations hold
	\begin{align}
		\mathcal{M}
		\lprod_{i=1}^{N_1} \bar{\mathbb{B}}_{1}(x_i)
		\lprod_{j=1}^{N_2} \bar{\mathbb{B}}_{2}(y_j)
		|0\rangle^{(1)}\bigotimes|0\rangle^{(2)}
		= &\sum_{{ \mu^{1} \subseteq [N_1,M_1]}\atop{\mu^{2} \subseteq [N_2,M_2]}}
		s_{\mu^{1}}\{\mathbf{x^1}\}  s_{\mu^{2}}\{\mathbf{y^2}\}
		| \mu^{1},\mu^{2} \rangle,\label{M-B1-BN-schur}	\\	
		\prescript{(2)}{}{\langle 0|}\bigotimes \prescript{(1)}{}{\langle 0|}
		\rprod_{j=1}^{N_2} \bar{\mathbb{C}}_{2}(y_j)
		\rprod_{i=1}^{N_1} \bar{\mathbb{C}}_{1}(x_i)\mathcal{M}^*
		= &\sum_{{ \mu^{1} \subseteq [N_1,M_1]}\atop{\mu^{2} \subseteq [N_2,M_2]}}
		s_{\mu^{1}}\{\mathbf{x^1}\}  s_{\mu^{2}}\{\mathbf{y^2}\}
		\langle \mu^{2},\mu^{1}|.\label{M-C1-CN-schur}
	\end{align}	
\end{proposition}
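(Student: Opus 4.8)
The plan is to deduce Proposition \ref{M-n-0-schur} directly from Proposition \ref{M-n-0} by recognizing that the nested sums over interlacing chains of boxed partitions appearing in \eqref{M-B1-BN-x} and \eqref{M-C1-CN-x} are precisely the combinatorial expansion of Schur functions in terms of the Gelfand--Tsetlin--type (semistandard) chains, after expressing everything through skew Schur functions. First I would rewrite the coefficient in \eqref{M-B1-BN-x}: for a chain $[N_1,M_1]\ni\pi^1_0\succ\cdots\succ\pi^1_{N_1}=\emptyset$ the monomial $\prod_{i=1}^{N_1}x_i^{|\pi^1_{i-1}|-|\pi^1_i|}$ equals $\prod_{i=1}^{N_1}s_{\pi^1_{i-1}/\pi^1_i}(x_i)$ by the definition \eqref{skew-schur} of the skew Schur function, since each factor $x_i^{|\mu|-|\nu|}$ is nonzero exactly when $\mu\succ\nu$, which is the interlacing condition automatically enforced by the chain. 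Hence the whole right-hand side of \eqref{M-B1-BN-x}, grouped by its top partition $\mu^1=\pi^1_0$ and $\mu^2=\pi^2_0$, becomes
\begin{align}
\sum_{\mu^1\subseteq[N_1,M_1]\atop\mu^2\subseteq[N_2,M_2]}
\Bigl(\sum_{\mu^1=\pi^1_0\succ\cdots\succ\pi^1_{N_1}=\emptyset}\ \prod_{i=1}^{N_1}s_{\pi^1_{i-1}/\pi^1_i}(x_i)\Bigr)
\Bigl(\sum_{\mu^2=\pi^2_0\succ\cdots\succ\pi^2_{N_2}=\emptyset}\ \prod_{j=1}^{N_2}s_{\pi^2_{j-1}/\pi^2_j}(y_j)\Bigr)
|\mu^1,\mu^2\rangle.
\end{align}

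Next I would show that the inner bracketed sum over chains equals $s_{\mu^1}\{x_1,\dots,x_{N_1}\}$. This is exactly the iterated branching rule \eqref{schur-skew-schur}: applying \eqref{schur-skew-schur} once peels off the last variable $x_{N_1}$ and introduces a skew factor $s_{\mu^1/\pi^1_1}(x_{N_1})$ together with $s_{\pi^1_1}\{x_1,\dots,x_{N_1-1}\}$; iterating $N_1$ times produces the full telescoping sum $\sum s_{\pi^1_0/\pi^1_1}(x_{N_1})\cdots s_{\pi^1_{N_1-1}/\pi^1_{N_1}}(x_1)$ over all intermediate $\pi^1_k$, with the base case $s_{\emptyset}=1$ when the chain terminates at $\emptyset$. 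One must check the index matching: \eqref{schur-skew-schur} as written peels variables from the largest index, so the resulting chain is indexed in the reverse order from \eqref{M-B1-BN-x}, but since the product $\prod_i s_{\pi^1_{i-1}/\pi^1_i}(x_i)$ ranges over all $i$ and all chains, relabelling the summation index $k\mapsto N_1-k$ turns one into the other — I would just remark this and move on. The constraint $\mu^1\subseteq[N_1,M_1]$ survives because any chain of length $N_1$ descending to $\emptyset$ with all parts $\le M_1$ forces $\mu^1$ to have at most $N_1$ rows and at most $M_1$ columns, which is precisely the box $[N_1,M_1]$. Carrying out the same computation with $y_j$ and $M_2$ gives the $\mu^2$ factor, establishing \eqref{M-B1-BN-schur}.

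For \eqref{M-C1-CN-schur} the argument is symmetric: Proposition \ref{M-n-0} gives the chain expansion for the dual side with chains $\emptyset=\pi^1_{-N_1}\prec\cdots\prec\pi^1_0\in[N_1,M_1]$, and the same monomial-to-skew-Schur identification plus the branching rule \eqref{schur-skew-schur} collapse the nested sum to $s_{\mu^1}\{\mathbf{x^1}\}s_{\mu^2}\{\mathbf{y^2}\}$ times $\langle\mu^2,\mu^1|$; here the interlacing runs upward rather than downward, but $\nu\prec\mu\iff\mu\succ\nu$ so $s_{\mu/\nu}$ is the same object and nothing changes. I anticipate the only genuinely fiddly point — and the one I would be most careful about — is the bookkeeping of which variable attaches to which link of the chain and the direction of the product $\lprod$ versus $\rprod$, i.e. making sure the reversal of summation index is consistent on both the $\bar{\mathbb{B}}$ and $\bar{\mathbb{C}}$ sides so that the same symmetric function $s_{\mu^1}\{\mathbf{x^1}\}$ (symmetric in its arguments, so ultimately order-insensitive) comes out; everything else is a direct substitution of \eqref{skew-schur} and an $N_j$-fold iteration of \eqref{schur-skew-schur}.
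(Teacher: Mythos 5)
Your proof is correct, and it rests on the same key identity as the paper's --- the branching rule \eqref{schur-skew-schur} --- but it is organized differently. The paper proves Proposition \ref{M-n-0-schur} by induction on $N_1,N_2$: it assumes the formula for $N_j-1$ variables, applies one more pair $\bar{\mathbb{B}}_{1}(x_{N_1})\bar{\mathbb{B}}_{2}(y_{N_2})$ via the single-step skew-Schur action \eqref{M-B1-B2-N}, and then absorbs the resulting factor $s_{\mu/\nu}(x_{N_1})s_{\nu}\{\bar{\mathbf{x}}^1\}$ using \eqref{schur-skew-schur}; it never invokes Proposition \ref{M-n-0}. You instead take the fully expanded chain sum of Proposition \ref{M-n-0} as given and resum it, recognizing $\prod_i x_i^{|\pi^1_{i-1}|-|\pi^1_i|}$ as $\prod_i s_{\pi^1_{i-1}/\pi^1_i}(x_i)$ and collapsing the sum over chains by the $N_j$-fold iterated branching rule. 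The two arguments are the same induction read in opposite directions; yours has the advantage of making explicit why the two expressions \eqref{scalar-product-A1-A2} and \eqref{scalar-product-S1-S2} for the scalar product agree (chains descending to $\emptyset$ are exactly the semistandard data behind $s_\mu$), while the paper's version avoids the index-reversal bookkeeping you rightly flag, since each inductive step attaches the newest variable to the outermost link automatically. Your handling of the reversal (relabelling $k\mapsto N_1-k$ and appealing to the symmetry of $s_{\mu^1}$ in its arguments) is sound, and the box constraint $\mu^j\subseteq[N_j,M_j]$ is justified correctly: interlacing adds at most one row per step, and intermediate partitions automatically stay within column bound $M_j$ once the top one does.
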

\begin{proof}
	Considering the special case of $|n^{1}\rangle^{(1)}\bigotimes|n^{2}\rangle^{(2)}=|0\rangle^{(1)}\bigotimes|0\rangle^{(2)}$ in the Eq.$(\ref{M-C1-C2-N})$ yields
	\begin{align}
		\mathcal{M} \bar{\mathbb{B}}_{1}(x)\bar{\mathbb{B}}_{2}(y) |0\rangle^{(1)}\bigotimes|0\rangle^{(2)}
		=& \sum_{{ \emptyset \prec \mu^{1} \subseteq [1,M_{1}]}\atop{\emptyset \prec \mu^{2} \subseteq [1,M_{2}]}}
		s_{{\mu^{1}}/\emptyset}(x) s_{{\mu^{2}}/\emptyset}(y)
		|\mu^{1},\mu^{2}\rangle \notag\\
		=&\sum_{{ \emptyset \prec \mu^{1} \subseteq [1,M_{1}]}\atop{\emptyset \prec \mu^{2} \subseteq [1,M_{2}]}}
		s_{\mu^{1}}(x) s_{\mu^{2}}(y)  |\mu^{1},\mu^{2}\rangle .
	\end{align}
	Meanwhile, for all $N_{1}, N_{2}\geq 2$, assume that
	\begin{align}\label{M-N-1}
		\mathcal{M}
		\prod_{i=1}^{N_{1}-1} \bar{\mathbb{B}}_{1}(x_i)
		\prod_{j=1}^{N_{2}-1} \bar{\mathbb{B}}_{2}(y_i)
		|0\rangle^{(1)}\bigotimes|0\rangle^{(2)}
		= &\sum_{{ \nu^{1} \subseteq [N_{1}-1,M_1]}\atop{\nu^{2} \subseteq [N_{2}-1,M_2]}}
		s_{\nu^{1}}\{\bar{\mathbf{x}}^1\}  s_{\nu^{2}}\{ \bar{\mathbf{y}}^2 \}
		| \nu^{1},\nu^{2} \rangle,
	\end{align}	
where
\begin{align}
\{\bar{\mathbf{x}}^1\}=\{x_1,\cdots,x_{N_1-1}\},\quad  \{\bar{\mathbf{y}}^2\}=\{y_1,\cdots,y_{N_2-1}\}.
\end{align}
Combining the definition of the map $\mathcal{M}$ and the Eq.(\ref{M-N-1}) yields
	\begin{align}
		\prod_{i=1}^{N_{1}-1} \bar{\mathbb{B}}_{1}(x_i)
		\prod_{j=1}^{N_{2}-1} \bar{\mathbb{B}}_{2}(y_i)
		|0\rangle^{(1)}\bigotimes|0\rangle^{(2)}
		= \sum_{{ 	| n^{1}\rangle^{(1)}|\Sigma_0^{n^{1}} =N_{1}-1}\atop{| n^{2}\rangle^{(2)}|\Sigma_0^{n^{2}}=N_{2}-1} }
		s_{\nu^{1}}\{\bar{\mathbf{x}}^1\}  s_{\nu^{2}}\{ \bar{\mathbf{y}}^2 \}
		| n^{1}\rangle^{(1)}\bigotimes|n^{2}\rangle^{(2)},
	\end{align}	
	where the sum is  over all basis vectors $| n^{1}\rangle^{(1)}\bigotimes|n^{2}\rangle^{(2)}$, which satisfies occupation numbers $\Sigma_0^{n^j} = \sum\limits_{k=0}^{M_j} n_k^j=N_{j}-1$ $(j=1,2)$  and corresponds to the $2-$partition $(\chi)=(\nu^{1},\nu^{2})$.
	Using the Eq.$(\ref{B,C,commute})$ and acting $\mathcal{M} \bar{\mathbb{B}}_{1}(x_{N_1})\bar{\mathbb{B}}_{2}(y_{N_2})$ on the above equation to obtain
	\begin{align}
		&\mathcal{M} \bar{\mathbb{B}}_{1}(x_{N_1})\bar{\mathbb{B}}_{2}(y_{N_2})
		\prod_{i=1}^{N_{1}-1} \bar{\mathbb{B}}_{1}(x_i)
		\prod_{j=1}^{N_{2}-1} \bar{\mathbb{B}}_{2}(y_i)	
		|0\rangle^{(1)}\bigotimes|0\rangle^{(2)}\notag\\
		=&\sum_{{ \nu^{1} \subseteq [N_{1}-1,M_1]}\atop{\nu^{2} \subseteq [N_{2}-1,M_2]}}
		s_{\nu^{1}}\{\bar{\mathbf{x}}^1\}  s_{\nu^{2}}\{ \bar{\mathbf{y}}^2 \} 	
		\sum_{{\nu^{1} \prec \mu^{1} \subseteq [N_1,M_1]}\atop{\nu^{2} \prec \mu^{2} \subseteq [N_2,M_2]}}
		s_{\mu^{1}/\nu^{1}}(x_{N_1}) s_{\mu^{2}/\nu^{2}}(y_{N_2}) 	
		| \mu^{1},\mu^{2} \rangle.
	\end{align}	
	Consider the interlacing relation $\nu^{j} \prec \mu^{j}$ and the Eq.$(\ref{schur-skew-schur})$, we get
	\begin{align}
		&\mathcal{M}
		\lprod_{i=1}^{N_1} \bar{\mathbb{B}}_{1}(x_i)
		\lprod_{j=1}^{N_2} \bar{\mathbb{B}}_{2}(y_j)
		|0\rangle^{(1)}\bigotimes|0\rangle^{(2)}\notag\\
		= &	\sum_{{ \mu^{1} \subseteq [N_1,M_1]}\atop{\mu^{2} \subseteq [N_2,M_2]}}
		\sum_{{ \nu^{1} \subseteq [N_{1}-1,\infty]}\atop{\nu^{2} \subseteq [N_{2}-1,\infty]}}
		s_{\mu^{1}/\nu^{1}}(x_{N_1}) 	s_{\nu^{1}}\{\bar{\mathbf{x}}^1\}
		s_{\mu^{2}/\nu^{2}}(y_{N_2}) 	s_{\nu^{2}}\{ \bar{\mathbf{y}}^2 \}
		| \mu^{1},\mu^{2} \rangle \notag\\	
		= &\sum_{{ \mu^{1} \subseteq [N_1,M_1]}\atop{\mu^{2} \subseteq [N_2,M_2]}}
		s_{\mu^{1}}\{\mathbf{x^1}\}  s_{\mu^{2}}\{\mathbf{y^2}\}
		| \mu^{1},\mu^{2} \rangle,	
	\end{align}	
	which means that it holds for any $N_{1}, N_{2} \geq 1$.

	The proof of the Eq.(\ref{M-C1-CN-schur}) is quite similar, so it is omitted.
\end{proof}
From the Lemma $\ref{M-n-0-schur}$, the scalar product can be rewritten as
\begin{align}\label{scalar-product-S1-S2}
&S(\{\mathbf{x^1}\}, \{\mathbf{y^2}\}, \{\mathbf{z^1}\}, \{\mathbf{v^2}\})\notag\\	
 =&\prescript{(2)}{}{\langle 0|}\bigotimes \prescript{(1)}{}{\langle 0|}
\rprod_{j=1}^{N_2} \bar{\mathbb{C}}_{2}(y_j)
\rprod_{i=1}^{N_1} \bar{\mathbb{C}}_{1}(x_i)
\lprod_{l=1}^{N_1} \bar{\mathbb{B}}_{1}(z_i)
\lprod_{k=1}^{N_2} \bar{\mathbb{B}}_{2}(v_j)
|0\rangle^{(1)}\bigotimes|0\rangle^{(2)}\notag\\	
=&\sum_{{ \mu^{1} \subseteq [N_1,M_1]}\atop{\mu^{2} \subseteq [N_2,M_2]}}
s_{\mu^{1}}\{\mathbf{x^1}\}  s_{\mu^{1}}\{\mathbf{z^1}\}
s_{\mu^{2}}\{\mathbf{y^2}\}  s_{\mu^{2}}\{\mathbf{v^2}\},
\end{align}	
which implies
\begin{align}
\sum_{{\pi^{1} \subseteq [N_{1}, N_{1}, M_{1}]}
	\atop{\pi^{2} \subseteq [N_{2}, N_{2}, M_{2}]}}
A_{\pi^{1} }(\{\mathbf{x^1}\}, \{\mathbf{z^1}\}) A_{\pi^{2} }(\{\mathbf{y^2}\}, \{\mathbf{v^2}\})
=
\sum_{{ \mu^{1} \subseteq [N_1,M_1]}\atop{\mu^{2} \subseteq [N_2,M_2]}}
s_{\mu^{1}}\{\mathbf{x^1}\}  s_{\mu^{1}}\{\mathbf{z^1}\}
s_{\mu^{2}}\{\mathbf{y^2}\}  s_{\mu^{2}}\{\mathbf{v^2}\}.
\end{align}
The generating function of  boxed UC plane partitions has been represented as  products  of Schur functions.
%
\section{The generating function of boxed UC plane partitions with the double scaling limit}
This section investigates  interlacing $2-$partitions  in terms of the charged fermionic Fock space. The two-side generalized phase model on infinite lattice limit ($M_{1},M_{2} \to \infty$) is studied. We prove that  the generating function of boxed UC plane partitions leads to the generating function of UC plane partitions with the double scaling limit, which means that the number of lattice sites $M_{1},M_{2} \to \infty$ and the number of particles $N_{1},N_{2} \to \infty$.
\subsection{Generating interlacing $2-$partitions}
Define vertex operators
\begin{align}\label{UCH}
	\Gamma_+(z,v) &= e^{H_{+}(z,v)}=
	\exp\left( \sum_{n=1}^{\infty} \left(\frac{z^n}{n} H_{n} + \frac{v^n}{n} \tilde{H}_{n} \right)\right), \\
	\Gamma_-(z,v) &=  e^{H_{-}(z,v)}=
	\exp\left( \sum_{n=1}^{\infty}  \left(\frac{z^n}{n} H_{-n} + \frac{v^n}{n} \tilde{H}_{-n} \right) \right).
\end{align}
Since
\begin{align}
	[H_{+}(z,v),H_{-}(x,y)]
	&=\sum_{m,n=1}^{\infty}   \frac{z^{m}x^{n}}{mn}[H_{m},H_{-n}]
	+\sum_{m,n=1}^{\infty}   \frac{v^{m}y^{n}}{mn}[\tilde{H}_{m},\tilde{H}_{-n}] \notag\\
	&=\sum_{m=1}^{\infty} \frac{(zx)^{m}}{m} +  \sum_{n=1}^{\infty} \frac{(vy)^{n}}{n},
\end{align}
then we have
\begin{align}\label{Gamme-commutation-relations}
	\Gamma_+(z,v) \Gamma_-(x,y)&=e^{[H_{+}(z,v),H_{-}(x,y)]} \Gamma_-(x,y)\Gamma_+(z,v)\notag\\
	&=\frac{1}{1-zx} \frac{1}{1-vy}   \Gamma_-(x,y)\Gamma_+(z,v).
\end{align}

\begin{proposition}\label{H1}
	For generating functions of charged fermions
	\begin{align}\label{UC-feimisc}
		\psi(k)&=\sum_{n\in\mathbb{Z}+1/2}\psi_nk^{-n-1/2},\quad \psi^*(k)=\sum_{n\in\mathbb{Z}+1/2}\psi_n^*k^{-n-1/2},\\
		\phi(k)&=\sum_{n\in\mathbb{Z}+1/2}\phi_nk^{-n-1/2},\quad \phi^*(k)=\sum_{n\in\mathbb{Z}+1/2}\phi_n^*k^{-n-1/2},
	\end{align}
	the following equations hold
	\begin{align}\label{commutation relations 3.1}
		[H_{+}(z,v),\psi(k)]&=\sum_{n=1}^{\infty} \frac{(zk)^n}{n} \psi(k),\quad
		[H_{-}(z,v), \psi^{\ast}(k)]
		=-\sum_{n=1}^{\infty} \frac{1}{n}  \left(\frac{z}{k}\right)^n \psi^{\ast}(k),\\
		[H_{+}(z,v),\phi(k)]&=\sum_{n=1}^{\infty} \frac{(vk)^n}{n} \phi(k),\quad
		[H_{-}(z,v),\phi^{\ast}(k)]=-\sum_{n=1}^{\infty} \frac{1}{n}   \left(\frac{v}{k}\right)^n \phi^{\ast}(k).
	\end{align}
\end{proposition}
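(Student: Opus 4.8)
The plan is to compute each commutator directly from the definitions, using the fundamental relations between the current operators $H_n$, $\tilde H_n$ and the fermion modes recorded in (\ref{commutation relations 2.1}). First I would treat the case of $[H_+(z,v),\psi(k)]$. Expanding $H_+(z,v)=\sum_{n\geq1}\bigl(\tfrac{z^n}{n}H_n+\tfrac{v^n}{n}\tilde H_n\bigr)$ and $\psi(k)=\sum_{m\in\mathbb{Z}+1/2}\psi_m k^{-m-1/2}$, linearity reduces everything to the known brackets $[H_n,\psi_m]=\psi_{m+n}$ and $[\tilde H_n,\psi_m]=0$ (the latter because $\psi$ and $\phi$ commute). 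Hence only the $H_n$ part contributes, and
\begin{align}
[H_+(z,v),\psi(k)]
=\sum_{n=1}^\infty\frac{z^n}{n}\sum_{m\in\mathbb{Z}+1/2}k^{-m-1/2}\psi_{m+n}
=\sum_{n=1}^\infty\frac{z^n}{n}k^{n}\sum_{m'\in\mathbb{Z}+1/2}k^{-m'-1/2}\psi_{m'},
\end{align}
after the shift $m'=m+n$, which is exactly $\sum_{n\geq1}\tfrac{(zk)^n}{n}\psi(k)$.

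Next I would do $[H_-(z,v),\psi^*(k)]$ in the same way: expand $H_-(z,v)=\sum_{n\geq1}\bigl(\tfrac{z^n}{n}H_{-n}+\tfrac{v^n}{n}\tilde H_{-n}\bigr)$, use $[H_{-n},\psi_m^*]=-\psi_{m-n}^*$ and $[\tilde H_{-n},\psi_m^*]=0$, so that
\begin{align}
[H_-(z,v),\psi^*(k)]
=-\sum_{n=1}^\infty\frac{z^n}{n}\sum_{m\in\mathbb{Z}+1/2}k^{-m-1/2}\psi_{m-n}^*
=-\sum_{n=1}^\infty\frac{1}{n}\Bigl(\frac{z}{k}\Bigr)^n\psi^*(k),
\end{align}
where now the shift $m'=m-n$ produces the factor $k^{-n}$. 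The two $\phi$-equations are obtained by the identical argument with the roles of $(z,H)$ and $(v,\tilde H)$ interchanged: since $\tilde H_n$ acts on the $\phi$-modes by $[\tilde H_n,\phi_m]=\phi_{m+n}$, $[\tilde H_{-n},\phi_m^*]=-\phi_{m-n}^*$ while $H_n$ commutes with all $\phi$'s, only the $v$-part of $H_\pm(z,v)$ survives and one gets $\sum_{n\geq1}\tfrac{(vk)^n}{n}\phi(k)$ and $-\sum_{n\geq1}\tfrac1n(v/k)^n\phi^*(k)$ respectively.

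There is really no serious obstacle here; the only point requiring minor care is the bookkeeping of the half-integer mode shifts and the resulting powers of $k$, together with the observation that the formal sums over $n$ are to be read as generating series (so convergence is not an issue, the identities being understood as equalities of formal power series in $z$ and $v$ with coefficients in the completed algebra of operators acting on $\bar{\mathcal F}$). One should also note that each of the four right-hand sides is, as expected, proportional to the corresponding current itself, which is the usual statement that the fermion fields are ``eigen-operators'' of the Heisenberg generators; this is precisely the structural fact that makes the vertex operator computations of the next subsection — e.g.\ the commutation relation (\ref{Gamme-commutation-relations}) between $\Gamma_+$ and $\Gamma_-$ — go through, so it is worth stating the four identities separately as done in the proposition.
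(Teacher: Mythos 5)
Your proposal is correct and follows essentially the same route as the paper: expand $H_\pm(z,v)$ and the fermion currents, apply the mode commutators from (\ref{commutation relations 2.1}), and shift the summation index to extract the factor $k^{\pm n}$. The paper writes out only the $[H_+(z,v),\psi(k)]$ case and asserts the rest are analogous, so your explicit treatment of the $H_-$ case and the remark on the sign of the $k$-shift merely fill in details the paper omits.
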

\begin{proof}
	Combining Eqs.(\ref{commutation relations 2.1}) and (\ref{UCH}) yields
	\begin{align}
		[H_{+}(z,v),\psi(k)]
		&=\sum_{n=1}^{\infty}\sum_{m\in\mathbb{Z}+1/2}
		\frac{z^n}{n}\Big[H_{n},\psi_m\Big]
		\cdot k^{-m-\frac{1}{2}} \notag\\
		&=\sum_{n=1}^{\infty}   \frac{z^n}{n}k^n
		\sum_{m^{'}\in\mathbb{Z}+1/2}\psi_{m^{'}} k^{-m^{'}-1/2}
		\notag\\
		&= \sum_{n=1}^{\infty} \frac{(zk)^n}{n}   \psi(k).
	\end{align}
	Other formulas can be proved by the same method.	
\end{proof}
\begin{lemma}\label{Gamma-(k)}
	The following equations hold
	\begin{align}\label{commutation relations 4.1}	
	&	\Gamma_+(z,v)\psi(k)\Gamma_+^{-1}(z,v)=\frac{1}{1-zk}\psi(k),\\		&\Gamma_-(z,v)\psi^{\ast}(k)\Gamma_-^{-1}(z,v)=\left(1-\frac{z}{k}\right)\psi^{\ast}(k),\\ &\Gamma_+(z,v)\phi(k)\Gamma_+^{-1}(z,v)=\frac{1}{1-vk}\phi(k),\\
	&	\Gamma_-(z,v)\phi^{\ast}(k)\Gamma_-^{-1}(z,v)=\left(1-\frac{v}{k}\right) \phi^{*}(k).
	\end{align}
\end{lemma}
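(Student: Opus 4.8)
The plan is to apply the Hadamard identity $e^{A}Be^{-A}=\sum_{n\ge 0}\frac{1}{n!}(\operatorname{ad}_{A})^{n}B$ with $A=H_{\pm}(z,v)$ and $B$ running over the fields $\psi(k),\psi^{*}(k),\phi(k),\phi^{*}(k)$, and to exploit the key feature already recorded in Proposition~\ref{H1}: each commutator $[H_{\pm}(z,v),B]$ is equal to $B$ times a \emph{scalar} power series in $z,v$ (with coefficients in $\mathbb{C}[k,k^{-1}]$). Because that scalar commutes with $B$, the Hadamard series collapses to an ordinary exponential series in the scalar, which is what produces the rational prefactors $\tfrac{1}{1-zk}$, $1-\tfrac{z}{k}$, etc.

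Concretely, for the first identity I would argue as follows. Writing $c=\sum_{n=1}^{\infty}\frac{(zk)^{n}}{n}$, Proposition~\ref{H1} gives $[H_{+}(z,v),\psi(k)]=c\,\psi(k)$; since $c$ is a scalar it commutes with $\psi(k)$, so $(\operatorname{ad}_{H_{+}(z,v)})^{n}\psi(k)=c^{n}\psi(k)$ for all $n\ge 0$, and hence
\[
\Gamma_+(z,v)\psi(k)\Gamma_+^{-1}(z,v)=\Big(\sum_{n\ge 0}\tfrac{c^{n}}{n!}\Big)\psi(k)=e^{c}\,\psi(k).
\]
Recognising $c=-\log(1-zk)$ via $-\log(1-t)=\sum_{n\ge 1}t^{n}/n$ gives $e^{c}=\tfrac{1}{1-zk}$, which is the first claim. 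The remaining three identities follow identically from the other commutators of Proposition~\ref{H1}: for $\Gamma_-(z,v)\psi^{*}(k)\Gamma_-^{-1}(z,v)$ the scalar is $-\sum_{n\ge1}\tfrac1n(z/k)^{n}=\log(1-z/k)$, producing the factor $1-z/k$; for $\Gamma_+(z,v)\phi(k)\Gamma_+^{-1}(z,v)$ it is $\sum_{n\ge1}\tfrac{(vk)^{n}}{n}=-\log(1-vk)$, producing $\tfrac1{1-vk}$; and for $\Gamma_-(z,v)\phi^{*}(k)\Gamma_-^{-1}(z,v)$ it is $-\sum_{n\ge1}\tfrac1n(v/k)^{n}=\log(1-v/k)$, producing $1-v/k$.

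The step I expect to require the most care — though it is bookkeeping rather than anything deep — is fixing the setting in which these manipulations are legitimate: one should work over formal power series in $z$ and $v$ (with Laurent-polynomial, or suitably completed, coefficients in $k$), check that $\Gamma_{\pm}(z,v)$ are well-defined operators there with $\Gamma_{\pm}^{-1}(z,v)=e^{-H_{\pm}(z,v)}$, and observe that the Hadamard expansion raises no convergence issue precisely because it reduces to the ordinary exponential series $\sum_{n\ge 0}c^{n}/n!$. Once that framework is in place the lemma is an immediate consequence of Proposition~\ref{H1}.
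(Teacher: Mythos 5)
Your proposal is correct and follows essentially the same route as the paper: both expand $\Gamma_{\pm}(z,v)\,B\,\Gamma_{\pm}^{-1}(z,v)$ via the Hadamard/adjoint series, use Proposition \ref{H1} to see that $[H_{\pm}(z,v),B]$ is a scalar multiple of $B$ so the nested commutators collapse to $e^{c}B$, and then identify $c$ with $\mp\log(1-\cdot)$ to obtain the rational prefactors. Your explicit remark about the formal-power-series setting is a point the paper leaves implicit, but the argument is the same.
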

\begin{proof}
	From the Eq.(\ref{commutation relations 3.1}), one obtains
	\begin{align}\label{UC-TD}
		\Gamma_+(z,v)\psi(k)\Gamma_+^{-1}(z,v)
		&=e^{H_{+}(z,v)}	\psi(k) e^{-H_{+}(z,v)}
		\notag\\	
		&=\psi(k)+[H_{+}(z,v),\psi(k)]+\frac{1}{2!}[H_{+}(z,v),[H_{+}(z,v),\psi(k)]]+\ldots
		\notag\\	
		&=\psi(k)+\sum\limits_{n=1}^{\infty} \frac{(zk)^n}{n} \psi(k)+\frac{1}{2!}\xi_{\pm}^{2}(\mathbf{x}-\tilde{\partial}_\mathbf{y},k)\psi(k)+\ldots
		\notag\\
		&=\exp{\left(\sum\limits_{n=1}^{\infty} \frac{(zk)^n}{n}\right) }\psi(k).
	\end{align}
	By means of $\sum\limits_{n=1}^{\infty} \frac{(zk)^n}{n}=-\ln(1-zk)$, we obtain
	\begin{align}
		\Gamma_+(z,v)\psi(k)\Gamma_+^{-1}(z,v)=\frac{1}{1-zk}	\psi(k).
	\end{align}
	Using the similar procedure, we can prove other equations.
\end{proof}

\begin{lemma}\label{Gamma-state}
The actions of vertex operators $\Gamma_+(z,v)$  and $\Gamma_-(z,v)$ on state vectors satisfy
\begin{align}
	\Gamma_+(z,v)|\mu^1,\mu^2 \rangle &=
	\sum_{{\nu^1 \prec \mu^1}	\atop{\nu^2 \prec \mu^2}}
	z^{|\mu^1| - |\nu^1|}    v^{|\mu^2| - |\nu^2|}         |\nu^1,\nu^2 \rangle,\\
	\langle \mu^2,\mu^1|\Gamma_-(z,v) &=
	\sum_{{\nu^1 \prec \mu^1}	\atop{\nu^2 \prec \mu^2}}
	z^{|\mu^1| - |\nu^1|}    v^{|\mu^2| - |\nu^2|}     \langle \nu^2,\nu^1|.
\end{align}
\end{lemma}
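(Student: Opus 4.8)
The plan is to decouple the two fermion flavours, reduce everything to the classical one--variable action of a half--vertex operator on the charged--fermion Fock space, and then read that action off from Lemma~\ref{Gamma-(k)}. First I would split the vertex operators by means of $[H_m,\tilde H_n]=0$, writing $\Gamma_+(z,v)=\bigl(e^{\sum_{n\ge 1}z^nH_n/n}\bigr)\bigl(e^{\sum_{n\ge 1}v^n\tilde H_n/n}\bigr)$ and similarly for $\Gamma_-(z,v)$; here the first factor involves only the $\psi$-fermions, the second only the $\phi$-fermions, and the two commute. By \eqref{right-state}--\eqref{left-state} the vector $|\mu^1,\mu^2\rangle$ is the tensor product of a vector built from the $\psi,\psi^{*}$ modes that depends only on $\mu^1$ and a vector built from the $\phi,\phi^{*}$ modes that depends only on $\mu^2$ (and $\langle\mu^2,\mu^1|$ likewise factors through $\langle\mathrm{vac}|$). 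Hence, after taking the tensor product, the two displayed identities reduce to the single--flavour statements
\[
e^{\sum_{n\ge 1}z^nH_n/n}\,|\mu\rangle=\sum_{\nu\prec\mu}z^{|\mu|-|\nu|}\,|\nu\rangle,\qquad
\langle\mu|\,e^{\sum_{n\ge 1}z^nH_{-n}/n}=\sum_{\nu\prec\mu}z^{|\mu|-|\nu|}\,\langle\nu|,
\]
for an arbitrary partition $\mu$, where $|\mu\rangle$ now denotes the corresponding state in the $\psi$-Fock space.

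For the ket identity I would use $H_n|\mathrm{vac}\rangle=0$ $(n>0)$, so that $e^{\sum_{n\ge 1}z^nH_n/n}|\mathrm{vac}\rangle=|\mathrm{vac}\rangle$. Since $|\mu\rangle$ is a monomial in the modes $\psi_m,\psi_m^{*}$ applied to $|\mathrm{vac}\rangle$, conjugating the exponential through it rescales each generating field: Lemma~\ref{Gamma-(k)} gives $e^{\sum z^nH_n/n}\psi(k)e^{-\sum z^nH_n/n}=(1-zk)^{-1}\psi(k)$, and the companion relation $e^{\sum z^nH_n/n}\psi^{*}(k)e^{-\sum z^nH_n/n}=(1-zk)\psi^{*}(k)$ follows in exactly the same way from $[H_n,\psi_m^{*}]=-\psi_{m+n}^{*}$ in \eqref{commutation relations 2.1}. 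On modes this is $\psi_m\mapsto\sum_{p\ge 0}z^{p}\psi_{m+p}$ and $\psi_m^{*}\mapsto\psi_m^{*}-z\psi_{m+1}^{*}$; substituting into the fermionic expression for $|\mu\rangle$ and expanding, the relations $\psi_a^{2}=\psi_a^{*2}=0$ annihilate every term except those whose shifted mode sequence is again admissible. Reindexing the surviving configurations by their associated partition $\nu$, one finds that they are exactly the $\nu$ with $\mu_1\ge\nu_1\ge\mu_2\ge\nu_2\ge\cdots$, i.e.\ $\nu\prec\mu$, and that the total accumulated weight is $z^{|\mu|-|\nu|}$; equivalently this is the vertex--operator form of the one--variable branching rule, $s_{\mu/\nu}(z)=z^{|\mu|-|\nu|}$ when $\mu/\nu$ is a horizontal strip and $0$ otherwise. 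The bra identity is the same computation carried out on $\langle\mathrm{vac}|$, using $\langle\mathrm{vac}|H_{-n}=0$ $(n>0)$ and the relation $e^{\sum z^nH_{-n}/n}\psi^{*}(k)e^{-\sum z^nH_{-n}/n}=(1-z/k)\psi^{*}(k)$ from Lemma~\ref{Gamma-(k)}, which produces the reflected mode shift and hence again the interlacing sum.

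The factorization and the conjugation of the exponentials past the fermions are routine. I expect the genuine content to lie in the last step: verifying that, after the geometric--series shift of the $\psi$-modes and the finite shift of the $\psi^{*}$-modes, the Pauli--allowed monomials are in bijection with the partitions $\nu$ interlacing $\mu$ and each carries exactly the weight $z^{|\mu|-|\nu|}$, i.e.\ that the shifted Maya diagrams encode precisely the horizontal strips $\mu/\nu$. That combinatorial bookkeeping, including the check that no spurious signs survive the reordering, is where I would take the most care; everything else is formal.
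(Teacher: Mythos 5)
Your proposal is correct and follows essentially the same route as the paper: commute the half--vertex operator through the fermion modes via Lemma~\ref{Gamma-(k)} (giving $\psi_m\mapsto\sum_{n\ge0}z^n\psi_{m+n}$), truncate the resulting geometric series using nilpotency of the modes, and identify the surviving Pauli--allowed configurations with the interlacing partitions $\nu\prec\mu$ carrying weight $z^{|\mu|-|\nu|}$. The only cosmetic differences are that you factor $\Gamma_\pm(z,v)$ into commuting $\psi$- and $\phi$-sectors at the outset and conjugate all the way down to $|\mathrm{vac}\rangle$ through the $\psi^{*}$-modes, whereas the paper works with the combined operator and stops the conjugation at the identity $\Gamma_+(z,v)|l_1,l_2\rangle=|l_1,l_2\rangle$.
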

\begin{proof}
By using the  Eq.$(\ref{right-state})$, we get
\begin{align}\label{Gamma_+-1}
	\Gamma_+(z,v)|\mu^1,\mu^2 \rangle
=\Gamma_+(z,v)\psi_{m_1^1} \ldots \psi_{m_{-l_1}^1}\phi_{m_1^2} \ldots \phi_{m_{-l_2}^2} | l_{1},l_{2} \rangle.
\end{align}
It follows from Lemma $\ref{Gamma-(k)}$ and the Eq.$(\ref{UC-feimisc})$ that
\begin{align}
\Gamma_+(z,v)\sum_{i\in\mathbb{Z}+1/2}\psi_i k^{-i-1/2}&=\sum_{n=0}^{\infty}  \sum_{j\in\mathbb{Z}+1/2}
(zk)^{n}  \psi_j k^{-j-1/2} \Gamma_+(z,v),\\
\Gamma_+(z,v)\sum_{i\in\mathbb{Z}+1/2}\phi_i k^{-i-1/2}&=\sum_{n=0}^{\infty}  \sum_{j\in\mathbb{Z}+1/2}
(vk)^{n}  \phi_j k^{-j-1/2} \Gamma_+(z,v).
\end{align}
Comparing the orders of $k$ on both sides, we have 
\begin{align}
\Gamma_+(z,v) \psi_i &=\sum_{n=0}^{\infty} z^{n} \psi_{(i+n)} \Gamma_+(z,v),\\
\Gamma_+(z,v) \phi_i &=\sum_{n=0}^{\infty} v^{n} \phi_{(i+n)} \Gamma_+(z,v).
\end{align}
Similarly,
\begin{align}
\Gamma_-(z,v) \sum_{n=0}^{\infty}  \psi_{(i-n)}^{*}  z^{n} &= \psi_i^{*}  \Gamma_-(z,v),\\
\Gamma_-(z,v) \sum_{n=0}^{\infty}  \phi_{(i-n)}^{*}  v^{n} &= \phi_i^{*}  \Gamma_-(z,v).
\end{align}
Combining $\Gamma_+(z,v) | l_{1},l_{2} \rangle=|  l_{1},l_{2} \rangle$, the Eq.$(\ref{Gamma_+-1})$ is rewritten as
\begin{align}
\Gamma_+(z,v)|\mu^1,\mu^2 \rangle
=&\left(\sum_{i_1=0}^{\infty} z^{i_1} \psi_{(i_1+m_1^1)}\right) \ldots
\left(\sum_{i_{-l_1}=0}^{\infty} z^{i_{-l_1}} \psi_{(i_{-l_1}+m_{-l_1}^1)}\right) \notag\\
&\left(\sum_{j_1=0}^{\infty} v^{j_1} \phi_{(j_1+m_1^2)}\right) \ldots
\left(\sum_{j_{-l_2}=0}^{\infty} v^{j_{-l_2}} \phi_{(j_{-l_2}+m_{-l_2}^2)}\right) | l_{1},l_{2} \rangle.
\end{align}
Since
\begin{align}
\left(\sum_{i_1=0}^{\infty} z^{i_1} \psi_{(i_1+m_1^1)}\right) &=
\left(\sum_{i_1=0}^{-m_1^{1}+m_2^{1}-1} z^{i_1}\psi_{(i_1+m_1^1)}\right)
+z^{(-m_1^{1}+m_2^{1})}\left(\sum_{l=0}^{\infty} z^{l} \psi_{(l+m_2^1)}\right),\\
\left(\sum_{j_1=0}^{\infty} v^{j_1} \phi_{(j_1+m_1^2)}\right) &=
\left(\sum_{j_1=0}^{-m_1^{2}+m_2^{2}-1} v^{j_1}\phi_{(j_1+m_1^2)}\right)
+v^{(-m_1^{2}+m_2^{2})}\left(\sum_{l=0}^{\infty} v^{l} \phi_{(l+m_2^2)}\right),
\end{align}
and based upon  the following identities
\begin{align}\label{KP-identity}
	\left(\sum_{i=0}^\infty z^i\psi_{(m+i)}\right)\left(\sum_{i=0}^\infty z^i \psi_{(m+i)}\right)=\left(\sum_{i=0}^\infty v^i\phi_{(m+i)}\right)\left(\sum_{i=0}^\infty v^i \phi_{(m+i)}\right)=0,
\end{align}
one obtains
\begin{align}
\Gamma_+(z,v)|\mu^1,\mu^2 \rangle=
&\left(\sum_{i_1=0}^{-m_1^{1}+m_2^{1}-1} z^{i_1}\psi_{(i_1+m_1^1)}\right) \ldots
\left(\sum_{i_{-l_1}=0}^{-m_{-l_1}^{1}+m_{(-l_1+1)}^{1}-1} z^{i_{-l_1}}\psi_{(i_{-l_1}+m_{-l_1}^1)}\right)\notag\\
&\left(\sum_{j_1=0}^{-m_1^{2}+m_2^{2}-1} v^{j_1}\phi_{(i_1+m_1^2)}\right)  \ldots
\left(\sum_{j_{-l_2}=0}^{-m_{-l_2}^{2}+m_{(-l_2+1)}^{2}-1} v^{j_{-l_2}}\phi_{(j_{-l_2}+m_{-l_2}^2)}\right)
| l_{1},l_{2} \rangle.
\end{align}
Note
\begin{align}
|\nu^1,\nu^2 \rangle&=\psi_{n_1^1} \ldots \psi_{n_{-l_1}^1}\phi_{n_1^2} \ldots \phi_{n_{-l_2}^2}| l_{1},l_{2} \rangle.
\end{align}
Consider the subscript relations,
\begin{align}
	m_{i}^{j} \leq n^i_{j} < m_{i+1}^{j}  \implies  \mu_i^{j}\geq \nu_i^{j} \geq \mu_{i+1}^{j}\quad
	\text{ for all } 1 \leq i \leq -l_{j}, \;  j=1,2,
\end{align}
then
\begin{align}
\Gamma_+(z,v)|\mu^1,\mu^2 \rangle=
&\left(\sum_{ m_1^{1}\leq n^1_{1} < m_2^{1}} z^{(n_1^{1}-m_1^{1})} \psi_{n^1_{1}}\right) \ldots
\left(\sum_{m_{-l_1}^{1} \leq    n_{-l_1}^{1}  < m_{(-l_1+1)}^{1}}
 z^{(n_{-l_1}^{1}-m_{-l_1}^{1})}   \psi_{ n_{-l_1}^{1} }\right)\notag\\
&\left(\sum_{m_1^{2}\leq n^2_{1} < m_2^{2}} v^{(n_1^{2}-m_1^{2})}\phi_{n^2_{1}}\right)  \ldots
\left(\sum_{   m_{-l_2}^{2} \leq  n_{-l_2}^{2}  < m_{(-l_2+1)}^{2}  }
    v^{(n_{-l_2}^{1}-m_{-l_2}^{2})} \phi_{ n_{-l_2}^{2} }\right)
| l_{1},l_{2} \rangle\notag\\
=&
\sum_{{\nu^1 \prec \mu^1}	\atop{\nu^2 \prec \mu^2}}
z^{|\mu^1| - |\nu^1|}    v^{|\mu^2| - |\nu^2|} |\nu^1,\nu^2 \rangle,
\end{align}
where $|\mu^j| - |\nu^j|=\sum\limits_{i=1}^{-l_{j}}(n^i_{j}- m_{i}^{j})$.

Analogous to the above derivation, we obtain
\begin{align}
	\langle \mu^2,\mu^1|\Gamma_-(z,v)
=\langle l_2,l_1|
&\left(  \sum_{m_{(-l_2+1)}^{2}   <   n_{-l_2}^{2}  \leq m_{-l_2}^{2}}
z^{(m_{-l_2}^{2}-n_{-l_2}^{2})} \phi_{ n_{-l_2}^{2} }^{*} \right)
\ldots
\left(\sum_{ m_2^{2}<n^1_{2} \leq m_1^{2}}
z^{(m_1^{2}-n_1^{2})}  \phi_{n^1_{2}}^{*} \right)\notag\\
&\left(  \sum_{m_{(-l_1+1)}^{1}   <   n_{-l_1}^{1}  \leq m_{-l_1}^{1}}
z^{(m_{-l_1}^{1}-n_{-l_1}^{1})} \psi_{ n_{-l_1}^{1} }^{*} \right)
 \ldots
\left(\sum_{ m_2^{1}<n^1_{1} \leq m_1^{1}}
z^{(m_1^{1}-n_1^{1})}  \psi_{n^1_{1}}^{*} \right)\notag\\
= \sum_{{\nu^1 \prec \mu^1}	\atop{\nu^2 \prec \mu^2}}
&z^{|\mu^1| - |\nu^1|}    v^{|\mu^2| - |\nu^2|}     \langle \nu^2,\nu^1|.
\end{align}
\end{proof}
\begin{remark}\label{Gamma+mu1,mu1-Gamma}
It is  showed that the action of vertex operators on state vectors produces interlacing $2-$partitions.
The case of $\mu^2=\emptyset$,  it can be deduced from the above derivation that
\begin{align}
	\Gamma_+(z,v)|\mu^1 \rangle &=
	\sum_{\nu^1 \prec \mu^1}
	z^{|\mu^1| - |\nu^1|}           |\nu^1 \rangle,\\
	\langle \mu^1|\Gamma_-(z,v) &=
	\sum_{\nu^1 \prec \mu^1}
	z^{|\mu^1| - |\nu^1|}    \langle \nu^1|.
\end{align}
The case of $\mu^1=\emptyset$ is similar.
\end{remark}

\subsection{The two-side generalized phase model on an infinite lattice}
Let us consider  the two-side generalized phase model by taking the number of lattice sites $M_{1},M_{2} \to \infty$.
\begin{lemma}\label{M-wq-n}
The basis vectors $|n^{1}\rangle^{(1)}\bigotimes|n^{2}\rangle^{(2)} \in \mathcal{V}$ and $	\prescript{(2)}{}{\langle n^{2}|}\bigotimes \prescript{(1)}{}{\langle n^{1}|} \in \mathcal{V}^{\ast} $ correspond to state vectors  $|\mu^1,\mu^2 \rangle \in \bar{\mathcal{F}}_{0,0}$ and $\langle \mu^2,\mu^1| \in \bar{\mathcal{F}}_{0,0}^*$ by maps $\mathcal{M}$ and $\mathcal{M}^*$,	respectively. The following equations hold
\begin{align}
&\mathcal{M} \left[ \lim_{M_{1},M_{2} \to \infty} \bar{\mathbb{B}}_{1}(z) \bar{\mathbb{B}}_{2}(v)|n^{1}\rangle^{(1)}\bigotimes|n^{2}\rangle^{(2)}\right] = \Gamma_-(z,v) | \nu^1,\nu^2 \rangle, \\
& \left[ \lim_{M_{1},M_{2}  \to \infty}
	\prescript{(2)}{}{\langle n^{2}|}\bigotimes \prescript{(1)}{}{\langle n^{1}|}  \bar{\mathbb{C}}_{2}(v)\bar{\mathbb{C}}_{1}(z) \right] \mathcal{M}^* =\langle \nu^2,\nu^1|  \Gamma_+(z,v).
\end{align}
\end{lemma}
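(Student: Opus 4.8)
The plan is to reduce both identities to the finite-lattice results already established and then let $M_1,M_2\to\infty$ degree by degree in the spectral parameters. Write $|\nu^{1},\nu^{2}\rangle=\mathcal{M}\big(|n^{1}\rangle^{(1)}\bigotimes|n^{2}\rangle^{(2)}\big)$ and $\langle\nu^{2},\nu^{1}|=\big(\prescript{(2)}{}{\langle n^{2}|}\bigotimes\prescript{(1)}{}{\langle n^{1}|}\big)\mathcal{M}^{*}$, and set $l_j=l(\nu^{j})$. For every finite $M_1,M_2$, relation (\ref{M-B1-B2-N}) gives
\[
\mathcal{M}\,\bar{\mathbb{B}}_{1}(z)\bar{\mathbb{B}}_{2}(v)\,|n^{1}\rangle^{(1)}\bigotimes|n^{2}\rangle^{(2)}
=\sum_{{\nu^{1}\prec\mu^{1}\subseteq[l_{1}+1,M_{1}]}\atop{\nu^{2}\prec\mu^{2}\subseteq[l_{2}+1,M_{2}]}}
s_{\mu^{1}/\nu^{1}}(z)\,s_{\mu^{2}/\nu^{2}}(v)\,|\mu^{1},\mu^{2}\rangle ,
\]
and (\ref{M-C1-C2-N}) gives the analogous identity with $\langle\mu^{2},\mu^{1}|$ for the $\bar{\mathbb{C}}$'s. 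Since $l(\nu^{j})=l_j$ and $\nu^{j}\prec\mu^{j}$ already force $l(\mu^{j})\le l_j+1$, the only content of the box $[l_j+1,M_j]$ is the column bound $\mu^{j}_1\le M_j$, which evaporates as $M_j\to\infty$; using the definition (\ref{skew-schur}) of the skew Schur function each surviving term becomes $z^{|\mu^{1}|-|\nu^{1}|}v^{|\mu^{2}|-|\nu^{2}|}|\mu^{1},\mu^{2}\rangle$. Hence
\[
\mathcal{M}\Big[\lim_{M_1,M_2\to\infty}\bar{\mathbb{B}}_{1}(z)\bar{\mathbb{B}}_{2}(v)\,|n^{1}\rangle^{(1)}\bigotimes|n^{2}\rangle^{(2)}\Big]
=\sum_{{\nu^{1}\prec\mu^{1}}\atop{\nu^{2}\prec\mu^{2}}} z^{|\mu^{1}|-|\nu^{1}|}v^{|\mu^{2}|-|\nu^{2}|}\,|\mu^{1},\mu^{2}\rangle ,
\]
and likewise for the bra expression with $\langle\mu^{2},\mu^{1}|$.

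Next I would prove the transposed forms of Lemma~\ref{Gamma-state}, namely
\[
\Gamma_-(z,v)\,|\nu^{1},\nu^{2}\rangle=\sum_{{\nu^{1}\prec\mu^{1}}\atop{\nu^{2}\prec\mu^{2}}} z^{|\mu^{1}|-|\nu^{1}|}v^{|\mu^{2}|-|\nu^{2}|}\,|\mu^{1},\mu^{2}\rangle ,\qquad
\langle\nu^{2},\nu^{1}|\,\Gamma_+(z,v)=\sum_{{\nu^{1}\prec\mu^{1}}\atop{\nu^{2}\prec\mu^{2}}} z^{|\mu^{1}|-|\nu^{1}|}v^{|\mu^{2}|-|\nu^{2}|}\,\langle\mu^{2},\mu^{1}| .
\]
Both follow from Lemma~\ref{Gamma-state} by a duality argument rather than a direct fermionic computation (the latter is awkward here, since $\Gamma_-$ does not fix $|l_1,l_2\rangle$): $\Gamma_\pm(z,v)$ preserve the charge $(0,0)$, and by (\ref{fock-scalar-product}) the families $\{|\mu^{1},\mu^{2}\rangle\}$ and $\{\langle\mu^{2},\mu^{1}|\}$ are mutually dual bases of $\bar{\mathcal{F}}_{0,0}$ and $\bar{\mathcal{F}}_{0,0}^{*}$. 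So to identify $\Gamma_-(z,v)|\nu^{1},\nu^{2}\rangle$ it is enough to compute, for every $2$-partition $\lambda$, the pairing $\langle\lambda^{2},\lambda^{1}|\Gamma_-(z,v)|\nu^{1},\nu^{2}\rangle$; expanding $\langle\lambda^{2},\lambda^{1}|\Gamma_-(z,v)$ by Lemma~\ref{Gamma-state} and collapsing with $\langle\rho^{2},\rho^{1}|\nu^{1},\nu^{2}\rangle=\delta_{\rho^{1},\nu^{1}}\delta_{\rho^{2},\nu^{2}}$ leaves exactly $z^{|\lambda^{1}|-|\nu^{1}|}v^{|\lambda^{2}|-|\nu^{2}|}$ when $\nu^{j}\prec\lambda^{j}$ for $j=1,2$ and $0$ otherwise, which is the claim. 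The bra identity is obtained the same way by pairing against $|\lambda^{1},\lambda^{2}\rangle$ and invoking Lemma~\ref{Gamma-state} for $\Gamma_+$ acting on kets. Comparing these with the stabilised sums of the first paragraph yields $\mathcal{M}[\lim\bar{\mathbb{B}}_{1}(z)\bar{\mathbb{B}}_{2}(v)(\cdots)]=\Gamma_-(z,v)|\nu^{1},\nu^{2}\rangle$ and $[\lim\,(\cdots)\bar{\mathbb{C}}_{2}(v)\bar{\mathbb{C}}_{1}(z)]\mathcal{M}^{*}=\langle\nu^{2},\nu^{1}|\Gamma_+(z,v)$.

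The one point requiring care is the interchange of $\lim_{M_1,M_2\to\infty}$ with the sum over boxed partitions, which is genuinely infinite once a single spectral parameter is used (the first part $\mu^{j}_1$ is then unbounded). I expect this to be the main obstacle, but it is resolved by the weight grading rather than by any analytic estimate: both sides are formal power series in $z,v$ with coefficients in $\bar{\mathcal{F}}_{0,0}$ (resp. $\bar{\mathcal{F}}_{0,0}^{*}$), interpreting the infinite sums as formal sums graded by total weight, and for fixed degrees $d=|\mu^{1}|-|\nu^{1}|$ and $e=|\mu^{2}|-|\nu^{2}|$ only finitely many $\mu^{1},\mu^{2}$ contribute, all with $\mu^{j}_1\le\nu^{j}_1+\max(d,e)$. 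Hence once $M_1,M_2$ are large the coefficient of $z^{d}v^{e}$ stabilises and coincides with the corresponding coefficient of $\Gamma_-(z,v)|\nu^{1},\nu^{2}\rangle$ (resp. $\langle\nu^{2},\nu^{1}|\Gamma_+(z,v)$). Thus the two identities hold as equalities of formal power series in $z$ and $v$, which completes the argument.
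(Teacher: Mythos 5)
Your proposal is correct and follows essentially the same route as the paper: take the $M_j\to\infty$ limit of the finite-box expansion from Proposition \ref{M-B-C-2} (equivalently its skew-Schur form), and obtain the transposed identities $\Gamma_-(z,v)|\nu^1,\nu^2\rangle=\sum z^{|\mu^1|-|\nu^1|}v^{|\mu^2|-|\nu^2|}|\mu^1,\mu^2\rangle$ and $\langle\nu^2,\nu^1|\Gamma_+(z,v)=\sum z^{|\mu^1|-|\nu^1|}v^{|\mu^2|-|\nu^2|}\langle\mu^2,\mu^1|$ by pairing against the dual basis via Lemma \ref{Gamma-state} and Eq.~(\ref{fock-scalar-product}), exactly as the paper does. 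Your closing paragraph justifying the interchange of the limit with the (now unbounded) sum via the weight grading is a point the paper passes over silently, but it does not change the argument's structure.
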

\begin{proof}
Through the action of the mappings  $\mathcal{M}$ and $\mathcal{M}^*$, the limit $M_{1}, M_{2} \to \infty$ in Eqs.$(\ref{M-B1-B2-2})$ and $(\ref{M-C1-C2-2})$ is corresponds to limit $l_{1}, l_{2} \to \infty$ . Then we have
\begin{align}
\mathcal{M} \left[ \lim_{M_{1},M_{2} \to \infty} \mathbb{B}_{1}(z) \mathbb{B}_{2}(v)|n^{1}\rangle^{(1)}\bigotimes|n^{2}\rangle^{(2)} \right]
&=\sum_{{\nu^1 \prec \mu^1}	\atop{\nu^2 \prec \mu^2}}
z^{|\mu^1| - |\nu^1|}    v^{|\mu^2| - |\nu^2|} |\mu^1,\mu^2 \rangle,  \\
\left[ \lim_{M_{1},M_{2}  \to \infty}
	\prescript{(2)}{}{\langle n^{2}|}\bigotimes \prescript{(1)}{}{\langle n^{1}|}
 \mathbb{C}_{2}(v)\mathbb{C}_{1}(z) \right] \mathcal{M}^*
&=\sum_{{\nu^1 \prec \mu^1}	\atop{\nu^2 \prec \mu^2}}
z^{|\mu^1| - |\nu^1|}    v^{|\mu^2| - |\nu^2|}     \langle \mu^2,\mu^1|.
\end{align}
From Lemma $\ref{Gamma-state}$ and the Eq.(\ref{fock-scalar-product}), we have
\begin{align}
	\langle \mu^2,\mu^1|\Gamma_-(z,v) |\nu^1,\nu^2 \rangle
	&=
	z^{|\mu^1| - |\nu^1|}    v^{|\mu^2| - |\nu^2|}, \\
	\langle \nu^2,\nu^1|\Gamma_+(z,v)|\mu^1,\mu^2 \rangle
	&=
	z^{|\mu^1| - |\nu^1|}    v^{|\mu^2| - |\nu^2|}  .
\end{align}
Thus
\begin{align}
\Gamma_-(z,v) |\nu^1,\nu^2 \rangle
&=\sum_{{\nu^1 \prec \mu^1}	\atop{\nu^2 \prec \mu^2}}
z^{|\mu^1| - |\nu^1|}    v^{|\mu^2| - |\nu^2|} |\mu^1,\mu^2 \rangle, \label{Gamma-right} \\
\langle \nu^2,\nu^1|\Gamma_+(z,v)
&=\sum_{{\nu^1 \prec \mu^1}	\atop{\nu^2 \prec \mu^2}}
z^{|\mu^1| - |\nu^1|}    v^{|\mu^2| - |\nu^2|}     \langle \mu^2,\mu^1|.\label{Gamma+left}
\end{align}
\end{proof}
An analogous discussion of the Remark $\ref{Gamma+mu1,mu1-Gamma}$ yields
\begin{align}
	\Gamma_-(z,v) |\nu^1 \rangle
	&=\sum_{\nu^1 \prec \mu^1}
	z^{|\mu^1| - |\nu^1|}     |\mu^1 \rangle, \label{Gamma-right-1}\\
	\langle \nu^1|\Gamma_+(z,v)
	&=\sum_{\nu^1 \prec \mu^1}
	z^{|\mu^1| - |\nu^1|}      \langle \mu^1|\label{Gamma+left-1},
\end{align}
and the case of $\nu^1=\emptyset$ is similar.

Considering the scalar product  with $M_{1},M_{2} \to \infty$, we obtain
\begin{align}
&\lim\limits_{M_{1},M_{2}\to \infty}	
S(\{\mathbf{x^1}\}, \{\mathbf{y^2}\}, \{\mathbf{z^1}\}, \{\mathbf{v^2}\})\notag\\
	=&\lim\limits_{M_{1},M_{2}\to \infty}
	\prescript{(2)}{}{\langle 0|}\bigotimes \prescript{(1)}{}{\langle 0|}
	\rprod_{j=1}^{N_2} \bar{\mathbb{C}}_{2}(y_j)
	\rprod_{i=1}^{N_1} \bar{\mathbb{C}}_{1}(x_i)
	\lprod_{l=1}^{N_1} \bar{\mathbb{B}}_{1}(z_i)
	\lprod_{k=1}^{N_2} \bar{\mathbb{B}}_{2}(v_j)
	|0\rangle^{(1)}\bigotimes|0\rangle^{(2)}\notag\\
	=&
	\sum_{{\pi^{1} \subseteq [N_{1}, N_{1},\infty]}
		\atop{\pi^{2} \subseteq [N_{2}, N_{2}, \infty]}}
	A_{\pi^{1} }(\{\mathbf{x^1}\}, \{\mathbf{z^1}\}) A_{\pi^{2} }(\{\mathbf{y^2}\}, \{\mathbf{v^2}\}).
\end{align}
Setting $N=\max\{N_{1},N_{2}\}$, by using the Eq.$(\ref{Gamme-commutation-relations})$ and the Lemma $\ref{M-wq-n}$, one gets
\begin{align}\label{scalar-product-Gamma1-Gamma2-m-wq}
	&\lim\limits_{M_{1},M_{2}\to \infty}
	\prescript{(2)}{}{\langle 0|}\bigotimes \prescript{(1)}{}{\langle 0|}
	\rprod_{j=1}^{N_2} \bar{\mathbb{C}}_{2}(y_j)
	\rprod_{i=1}^{N_1} \bar{\mathbb{C}}_{1}(x_i)
	\lprod_{l=1}^{N_1} \bar{\mathbb{B}}_{1}(z_i)
	\lprod_{k=1}^{N_2} \bar{\mathbb{B}}_{2}(v_j)
	|0\rangle^{(1)}\bigotimes|0\rangle^{(2)}\notag\\
	=&
\langle 0|
\Gamma_+(x_{N},y_{N})\cdots\Gamma_+(x_{1},y_{1})
\Gamma_-(z_1,v_1)\cdots\Gamma_-(z_{N},v_{N})
	|0\rangle\notag\\
	=&	\prod_{i,l=1}^{N_1} \frac{1}{1 - z_{l} x_{j}} \prod_{j,k=1}^{N_2} \frac{1}{1 - y_{j} v_{k}},
\end{align}
which means
\begin{align}\label{scalar-product-M}
	\sum_{{\pi^{1} \subseteq [N_{1}, N_{1},\infty]}
		\atop{\pi^{2} \subseteq [N_{2}, N_{2}, \infty]}}
		A_{\pi^{1} }(\{\mathbf{x^1}\}, \{\mathbf{z^1}\}) A_{\pi^{2} }(\{\mathbf{y^2}\}, \{\mathbf{v^2}\})
	=\prod_{i,l=1}^{N_1} \frac{1}{1 - z_{l} x_{j}} \prod_{j,k=1}^{N_2} \frac{1}{1 - y_{j} v_{k}}.
\end{align}
Moreover, it follows from the Eq.$(\ref{scalar-product-S1-S2})$ that
\begin{align}
\sum_{{ \mu^{1} \subseteq [N_1,\infty]}\atop{\mu^{2} \subseteq [N_2,\infty]}}
s_{\mu^{1}}\{\mathbf{x^1}\}  s_{\mu^{1}}\{\mathbf{z^1}\}
s_{\mu^{2}}\{\mathbf{y^2}\}  s_{\mu^{2}}\{\mathbf{v^2}\}	
	=\prod_{i,l=1}^{N_1} \frac{1}{1 - x_{i}z_{l} } \prod_{j,k=1}^{N_2} \frac{1}{1 - y_{j} v_{k}}.
\end{align}
We will show that the Eq.$(\ref{scalar-product-M})$ can be rewritten as the generating function of UC plane partitions with the double scaling limit.
Set
\begin{align}\label{set-z}
	x_{n}=z_{n}=p^{n-\frac{1}{2}}, ~ y_{m}=v_{m}=q^{m-\frac{1}{2}},
	\quad \text{for all } 1 \leq n \leq N_1, ~ 1 \leq m \leq N_2.
\end{align}
Substituting the Eq.$(\ref{set-z})$ into the Eq.$(\ref{scalar-product-Gamma1-Gamma2-m-wq})$, we have
\begin{align}
	&\langle 0|
	\Gamma_+(p^{N- \frac{1}{2}},q^{N - \frac{1}{2}})\cdots\Gamma_+(p^{\frac{1}{2}},q^{\frac{1}{2}})
	\Gamma_-(p^{\frac{1}{2}},q^{\frac{1}{2}})\cdots\Gamma_-(p^{N - \frac{1}{2}},q^{N - \frac{1}{2}})
	|0\rangle \notag\\
	=&\prod_{i,l=1}^{N_1} \frac{1}{1 - p^{l+i-1}} \prod_{j,k=1}^{N_2} \frac{1}{1 - q^{j+k-1}}.
\end{align}
From Eqs.$(\ref{Gamma-right})$ and $(\ref{Gamma+left})$, we assume
\begin{align}
	\Gamma_-(p^{i-\frac{1}{2}},q^{i-\frac{1}{2}}) |\nu^1,\nu^2 \rangle
	&=\sum_{{\nu^1 \prec \pi^1_{i-1}}	\atop{\nu^2 \prec \pi^2_{i-1}}}
	p^{(i-\frac{1}{2})(|\pi^1_{i-1}| - |\nu^1|)}  q^{(i-\frac{1}{2})(|\pi^2_{i-1}| - |\nu^2|)} |\pi^1_{i-1},\pi^2_{i-1} \rangle,  \\
	\langle \nu^2,\nu^1|\Gamma_+(p^{i-\frac{1}{2}},q^{i-\frac{1}{2}})
	&=\sum_{{\nu^1 \prec \pi^1_{-i+1}}	\atop{\nu^2 \prec \pi^2_{-i+1}}}
	p^{(i-\frac{1}{2})(|\pi^1_{-i+1}| - |\nu^1|)}    q^{(i-\frac{1}{2})(|\pi^2_{-i+1}| - |\nu^2|)}     \langle \pi^2_{-i+1},\pi^1_{-i+1}|.
\end{align}
By means of Eqs.$(\ref{Gamma-right})-(\ref{Gamma+left-1})$, one gets
\begin{align}
&\Gamma_-(p^{\frac{1}{2}},q^{\frac{1}{2}})\cdots\Gamma_-(p^{N - \frac{1}{2}},q^{N - \frac{1}{2}})
|0\rangle \notag\\
=&	\sum_{{ [N_{1}, \infty] \ni \pi^{1}_0 \succ \cdots \succ \pi^{1}_{N_{1}} = \emptyset}
	\atop{[N_{2}, \infty] \ni \pi^{2}_0 \succ \cdots \succ \pi^{2}_{N_{2}} = \emptyset}}
\prod_{j=1}^{N_1}	p^{(j-\frac{1}{2})(|\pi^1_{j-1}| - |\pi^1_{j}|)}
\prod_{k=1}^{N_2}	q^{(l-\frac{1}{2})(|\pi^1_{k-1}| - |\pi^2_{k}|)}  |\pi^{1}_0,\pi^{2}_0\rangle, \\
&\langle 0|
\Gamma_+(p^{N - \frac{1}{2}},q^{N - \frac{1}{2}})\cdots
\Gamma_+(p^{\frac{1}{2}},q^{\frac{1}{2}})\notag\\
=&\sum_{{ \emptyset=\pi^{1}_{-N_{1}} \prec \cdots \prec \pi^{1}_0 \in [N_{1}, \infty]  }
	\atop{\emptyset=\pi^{2}_{-N_{2}} \prec \cdots \prec \pi^{2}_0  \in [N_{2}, \infty]  }}
\prod_{i=1}^{N_1} p^{(i-\frac{1}{2})(|\pi^1_{-i+1}| - |\pi^1_{i}|)}
\prod_{l=1}^{N_2}	 q^{(l-\frac{1}{2})(|\pi^2_{-l+1}| - |\pi^2_{l}|)} \langle\pi^{2}_0, \pi^{1}_0 |,
\end{align}
then
\begin{align}
	&\langle 0|
	\Gamma_+(p^{N_1 - \frac{1}{2}},q^{N_2 - \frac{1}{2}})\cdots\Gamma_+(p^{\frac{1}{2}},q^{\frac{1}{2}})
	\Gamma_-(p^{\frac{1}{2}},q^{\frac{1}{2}})\cdots\Gamma_-(p^{N_1 - \frac{1}{2}},q^{N_2 - \frac{1}{2}})
	|0\rangle \notag\\
	=&\sum_{{\pi^{1} \subseteq [N_{1}, N_{1},\infty]}	\atop{\pi^{2} \subseteq [N_{2}, N_{2}, \infty]}}		p^{|\pi^{1}|}q^{|\pi^{2}|}.
\end{align}
It is easy to show that
\begin{align}
\sum_{{\pi^{1} \subseteq [N_{1}, N_{1},\infty]}	\atop{\pi^{2} \subseteq [N_{2}, N_{2}, \infty]}}	
	p^{|\pi^{1}|}q^{|\pi^{2}|}	
=\prod_{i,l=1}^{N_1} \frac{1}{1 - p^{l+i-1}} \prod_{j,k=1}^{N_2} \frac{1}{1 - q^{j+k-1}}.
\end{align}
Considering the  number of particles $N_{1},N_{2} \to \infty$,
\begin{align}
	\sum_{\begin{smallmatrix}\pi^{1}\text{ and }\pi^{2}\text{ are}\\\text{plane partitions}\end{smallmatrix}}
	p^{|\pi^{1}|}q^{|\pi^{2}|}
&=\lim\limits_{N_{1},N_{2}\to \infty}	
\left(\prod_{i=1}^{N_1} \frac{1}{1 - p^{i}}  \cdots  \prod_{i=N_1}^{2N_1} \frac{1}{1 - p^{i}}\right)
\left(\prod_{j=1}^{N_2} \frac{1}{1 - q^{j}}  \cdots  \prod_{j=N_2}^{2N_2} \frac{1}{1 - q^{j}}\right)\notag\\
	&=\prod_{n=1}^\infty\left(\frac1{1-p^n}\right)^n\prod_{m=1}^\infty\left(\frac1{1-q^m}\right)^m.
\end{align}
Therefore, the  generating function of UC plane partitions is obtained with the double scaling limit.
\section{Conclusions and discussions}
This paper is concerned with the  relation between the two-side generalized phase model and boxed UC plane partitions.
We define boxed UC plane partitions and study the representation of two-side generalized phase algebras.
Meanwhile, with the help of maps $\mathcal{M}$ and $\mathcal{M}^*$,  interlacing boxed $2-$partitions have been generated.
By means of the  scalar product of the two-site generalized phase model, we discuss the generating function of  boxed UC plane partitions and its double scaling limit case $(M_{j},N_{j} \to \infty, j=1,2)$. It should be pointed out that
relations between the limiting forms for the scalar product of the five-vertex model and   complete symmetric polynomials have been presented \cite{five}.
The limiting form of the scalar product and its determinantal representation in the two-site generalized phase model deserves further exploration. 
\section{Acknowledgements}
This article is dedicated to Professor Ke Wu in Capital Normal University in celebration of his 80th birthday. And this work is supported by the National Natural Science Foundation of China (Grant No.12061051) and the Program for Young Talents of Science and Technology in Universities of Inner Mongolia Autonomous Region (Grant No. NJYT23096). The authors gratefully acknowledge the support of Professor Ke Wu and Professor Weizhong Zhao at Capital Normal University, China.

\end{document}